\newtheorem{theorem}{Theorem}
\newtheorem{lemma}{Lemma}
\newtheorem{remark}{Remark}
\begin{document}
\title{Learning Based Hybrid Beamforming Design  for  Full-Duplex Millimeter Wave Systems}

\author{
	Shaocheng~Huang, Yu~Ye, \IEEEmembership{Student Member, IEEE}  and  Ming~Xiao, \IEEEmembership{Senior Member, IEEE}
	\thanks{S.~Huang, Y.~Ye and M.~Xiao are with the Division of Information Science and Engineering, KTH Royal Institute of Technology, Stockholm, Sweden (e-mail: \{shahua, yu9, mingx\}@kth.se).}
}

\maketitle

\begin{abstract}
Millimeter Wave (mmWave) communications  with full-duplex (FD) have the potential of increasing the spectral efficiency, relative to those with half-duplex. However, the residual self-interference (SI) from FD and high pathloss inherent to mmWave signals may degrade the system performance. 
Meanwhile, hybrid beamforming (HBF) is an efficient technology to enhance the channel gain and mitigate  interference with reasonable complexity.  However, conventional HBF approaches for FD mmWave systems are  based on optimization  processes, which are either too complex or  strongly  rely on the quality of channel state information (CSI). 
We propose two learning schemes to design HBF for FD mmWave systems,  i.e., extreme learning machine  based HBF (ELM-HBF) and convolutional neural networks based HBF (CNN-HBF). Specifically, we first propose an alternating direction method of multipliers (ADMM) based algorithm to achieve SI cancellation beamforming, and then use a majorization-minimization (MM) based algorithm for joint transmitting and receiving HBF optimization. 
To train the learning networks, we simulate  noisy channels as input, and select the hybrid beamformers calculated by  proposed algorithms as targets. Results show that both learning based schemes can provide more robust HBF performance and achieve at least $22.1\%$ higher spectral efficiency compared to orthogonal matching pursuit (OMP) algorithms. Besides, the online prediction time of proposed learning based schemes is almost 20 times faster than the OMP scheme. Furthermore,  the  training  time  of  ELM-HBF  is  about  600  times  faster  than  that  of CNN-HBF  with $64$ transmitting and receiving antennas. 



\end{abstract}

\begin{IEEEkeywords}
Millimeter wave, full-duplex, hybrid beamforming, convolutional neural network, extreme learning machine.
\end{IEEEkeywords}

\section{Introduction}

With the development of various emerging applications (e.g., virtual reality, augmented reality, autonomous driving and big data analysis),  
data traffic has explosively increased and caused   growing demands for very high  communication rates  in future wireless communications, e.g.,  the fifth generation (5G) and beyond\cite{Ming2017xiao}.  A common approach to
meet the requirements of high  rates  is to explore the potential for improvements in bandwidth and spectral
efficiency. 
 Millimeter wave (mmWave) communications have recently
 received increasing research attention  because of   the large available bandwidth at the mmWave carrier  frequencies (e.g., more than $150$ GHz available bandwidth)\cite{zhang2019precoding}. Thus, mmWave
 communications can potentially provide high data rates. For instance,  IEEE 802.11ad  working on the carrier frequency of $60$ GHz, can support  a maximum data rate of $7$ Gbps\cite{Wang2018}.
   Thanks to the short wavelength of mmWave radio,  large antenna arrays can be packed into mmWave transceivers with limited sizes, thereby resulting highly  directional signals and high  array gains\cite{el2014spatially}. 
  Despite of high data rates,   mmWave communications suffer from severe pathloss and  penetration loss, which limit the coverage of mmWave signals.  For example, the pathloss is about $110$ dB when transmitter-receiver separation distance is $100$ meters \cite{rappaport2017overview}.

Full-duplex (FD) communications, which can support  simultaneous transmission and reception on the same channels, have the potential to double the throughput and reduce latency compared to half-duplex (HD) communications. 
To provide high data rates and improve the coverage of wireless networks,  FD relays have been recently applied in mmWave communications as wireless backhauls\cite{atzeni2017full,han2018precoding,sharma2017dynamic}. Since FD systems  suffer from severe self-interference (SI), SI cancellation (SIC) is one of the main challenges for FD mmWave systems.  
For example, a FD transceiver
at $60$ GHz with a typical transmit power of $14$ dBm, receiver
noise figure of $5$ dB and channel bandwidth of $2.16$ GHz will
require $96$ dB of SIC \cite{dinc2017millimeter}.
Generally, SI can be suppressed by making use of the physical methods, which enhances the propagation loss for the SI signals and maintains a high gain for the desired signals \cite{han2018precoding,satyanarayana2018hybrid,he2017spatiotemporal,dinc2017millimeter}.  
For instance, narrow-beam antennas or beamforming techniques \cite{han2018precoding,satyanarayana2018hybrid} can be used to separate the communication channel and SI channel in directions, and polarization isolation and  antenna spacing\cite{he2017spatiotemporal,dinc2017millimeter} can also be applied.  
Recent measurement in \cite{dinc2017millimeter} shows that almost $80$ dB of SI suppression can be achieved  with  polarization based antennas in  $60$ GHz bands. 
 In addition,  conventional microwave FD systems only consider normal cancellation of the line-of-sight (LOS) SI and  ignore the non-line-of-sight (NLOS) SI. However, in mmWave FD systems, the NLOS SI will be enhanced  due to the high-gain beamforming \cite{zhang2019precoding}. Besides,
 since circuit and hardware complexity scales up  with frequency, conventional full digital processing, which controls both the phases and amplitudes of original signals, becomes very expensive in mmWave systems.  Thus, hybrid beamforming (HBF) that consists of
 digital and analog processing is promising to achieve an optimal trade-off between performance and  complexity. 
 
There have been  few results on the HBF design and SIC for FD mmWave systems \cite{abbas2016full,xiao2017full,han2018precoding,zhang2019precoding}.  Based on the sparsity of mmWave channels, orthogonal matching pursuit (OMP) based HBF algorithm is proposed for  FD mmWave systems \cite{abbas2016full}. However,  SI is not considered in \cite{abbas2016full}, which might significantly affect system performance. In \cite{xiao2017full},  a near-field propagation model is adopted for  line-of-sight (LOS) SI. It is shown that the  SIC performance   can be improved  by increasing the number  of transmitter (TX) or receiver (RX) radio frequency (RF) chains. 
In \cite{han2018precoding},  a combined LOS and  non-line-of-sight (NLOS) SI channel is proposed, and  a decoupled analog-digital (DAD) HBF algorithm is provided to suppress both LOS and NLOS SI. By jointly optimizing the analog and digital precoders, an OMP based  SIC beamforming algorithm for FD mmWave relays is proposed in \cite{zhang2019precoding}. It is shown that the OMP based HBF algorithm can achieve higher spectral efficiency than  DAD HBF algorithm in \cite{han2018precoding}. 
Although the approaches in \cite{zhang2019precoding,han2018precoding}  can perfectly  eliminate SI,  the SIC  of FD mmWave systems is based on the null space of the effective SI channel after  designing optimal  hybrid beamformers,  which will cause a significant degradation in system spectral efficiency. Furthermore, these approaches  can   perform   SIC only when the number of TX-RF chains is greater or equal to the sum of the number of RX-RF chains and the number of transmitting  streams. 
Moreover, in realistic communication systems, since we cannot always obtain perfect channel state information (CSI) through channel estimation, the existing optimization-based FD mmWave HBF approaches cannot provide robust performance in the presence of imperfect CSI. 

Recent development in   machine learning (ML) provides a new way for addressing problems in physical layer communications (e.g., direction-of-arrival estimation\cite{huang2018deep}, analog beam selection \cite{long2018data} and signal detection\cite{samuel2017deep}). ML based techniques have several advantages such as low complexity when solving non-convex problems and the ability  to extrapolate new features from noisy and limited training data \cite{elbir2019cnn}. 
In  \cite{huang2019deep,lin2019beamforming}, precoders are designed  based on  ML techniques, in which a learning network with  multiple fully connected layers is used. 
However, dense  multiple fully connected layers may increase the computational complexity, and these works only optimize the precoder with fixed combiners. In \cite{elbir2019cnn}, a convolutional neural network (CNN) framework is first proposed to jointly optimize the precoder and combiner, in which the network takes the channel matrix as the  input and produces the analog and digital  beamformers  as outputs. To reduce the complexity in training stage in \cite{elbir2019cnn}, an equivalent channel HBF algorithm is proposed   to provide accurate labels for training samples \cite{bao2020deep}. To further reduce the computational complexity,  joint antenna selection and HBF design is studied in \cite{elbir2019joint} based on quantized  CNN with the cost of prediction accuracy degradation.  Though above results can achieve good performance of ML based HBF, all of them consider single-hop scenarios. To the best of our knowledge, the joint  SIC and  HBF design  for FD mmWave relay  systems, being of practical importance, has not been investigated in the context of  ML.   

Motivated by above observations,  we investigate the joint HBF and SIC optimization for FD mmWave relay systems based on  ML techniques. 
 The main contributions of this paper are summarized as follows:

 \begin{itemize}
  \item We decouple the joint SIC and  HBF optimization problem into two sub-problems. We first propose an  alternating direction method of mul-tipliers  (ADMM) based algorithm to jointly eliminate residual SI and optimize unconstrained beamformers. With perfect SIC and unconstrained beamformers, many existing algorithms (e.g., PE-AltMin \cite{yu2016alternating}, GEVD \cite{lin2019hybrid} and methods in \cite{sohrabi2016hybrid})  cannot  be  directly used for HBF design since the  unconstrained beamformers may not be mutually orthogonal.  Thus, we  propose a majorization-minimization (MM)  based algorithm that jointly optimizes the transmitting and receiving HBF. To the best of our knowledge, the ADMM based SIC beamforming and MM based joint transmitting and receiving HBF optimization for FD mmWave systems have not been previously  studied. Unlike  the works  in \cite{zhang2019precoding,han2018precoding}, our proposed approaches can  perform perfect  SIC even if  the  number of  TX  RF  chains  is  smaller  than  the  sum  of  the number of RX RF chains and the number of transmitting  streams.
  Finally, the convergence  and computational complexity of  proposed algorithms are analyzed. 
 \item    Two learning frameworks for HBF design are proposed  (i.e., extreme learning machine  based HBF (ELM-HBF) and CNN based HBF (CNN-HBF)).  We utilize  ELM and CNN  to estimate the precoders and combiners of FD mmWave systems. To support robust HBF performance, noisy channel input data is generated and fed into the learning machine for training. Different from existing optimization based HBF methods, of which the performance strongly relies on the quality of CSI, our learning based approaches can achieve more robust performance since ELM and CNN  are effective at handling the imperfections and  corruptions in the input channel information. 
 
  \item   To the best of our knowledge, HBF design  with ELM  has not been studied before. Also, the performance of ELM-HBF  with different activation functions is tested. Since the optimal weight matrix of hidden layer is derived in a closed-form, the complexity of ELM-HBF is much lower than that of CNN-HBF and   easier for implementation.
  Results show that ELM-HBF can achieve near-optimal performance, which outperforms CNN-HBF and other conventional HBF methods. The training  time of ELM is about  600 times faster than that of  CNN with $64$ transmitting and receiving antennas. While the    conventional methods require an optimization process, our learning based approaches can estimate the beamformers by simply feeding the learning machines with channel matrices. Results also show that,  the  online  prediction  time  of proposed  learning  based  approaches  is  almost  20  times  faster  than the OMP approach. 
\end{itemize}                                

The remainder of this paper is organized as follows. We first present the system model of the FD mmWave relay in Section II. For  SIC and HBF design, we present an  ADMM based SIC beamforming algorithm and an MM based HBF algorithm in Section III. The ELM-HBF and CNN-HBF learning schemes are presented in Section IV. To validate the efficiency of proposed methods,  we  provide  numerical  simulations  in  Section  V. Finally, Section VI concludes the paper. 

\emph{Notations}:  Bold lowercase and uppercase letters denote vectors and matrices, respectively.  $\text{Tr}( {\bf A})$, $\left| {\bf A} \right|$, $\left\| {\bf A} \right\|_\text{F}$, ${\bf A}^*$, ${\bf A}^T$ and ${\bf A}^H$ denote trace, determinant, Frobenius norm, conjugate, transpose and conjugate transpose of matrix ${\bf A}$, respectively. $\otimes$ presents the Kronecker product. $\arg ({\bf a})$ denotes the  argument/phase of vector ${\bf a}$.

\section{system model and problem formulation}\label{systemmodel}

\subsection{System model}
We consider a one-way FD mmWave relay system shown in Fig. 1,  in which the source  node and the destination node are different base stations connected with the FD mmWave relay. We assume that there is no direct link between the  source and destination, which is typical for a mmWave system due to the high pathloss\cite{zhang2019precoding,han2018precoding}. All the  nodes in this system adopt hybrid analog and digital precoding architecture.     
The source node is equipped with  $N_\text{t}$  antennas with  $N_{\text{RFS}}$ RF chains, and transmits $N_\text{s}$ data streams simultaneously. To enable multi-stream transmission, we assume  $N_\text{s} \le N_\text{RFS} \le N_\text{t}$. For the relay and destination nodes, the numbers of antennas and RF chains and the data streams at the relay are defined in the same way, depicted in Fig. 1. 

At the source node, the $N_\text{s} \times 1$ symbol vector, denoted by ${\bf{s}}_\text{S}$ with $\mathbb{E}[{\bf{s}}_\text{S}{\bf{s}}_\text{S}^H]=N_\text{s}^{-1}{\bf I}_{N_\text{s}}$, is firstly precoded through an $N_\text{RFS} \times N_\text{s}$ digital  precoding matrix ${\bf V}_\text{BB}$, and then processed by an $N_\text{t} \times N_\text{RFS} $ analog precoding matrix ${\bf V}_\text{RF}$,  
which is implemented in the analog circuitry using phase
 shifters. Thus, the $N_\text{t} \times 1 $ transmitted signal of the source node is given as
 \begin{equation}\label{tran_signal_s}
 {\bf x}_\text{S} = \sqrt{P_\text{S}}{\bf V}_\text{RF} {\bf V}_\text{BB} {\bf{s}}_\text{S},
 \end{equation}
 where ${P_\text{S}}$ is the transmit power of the source node. The power constraint of the precoding matrices is denoted by $\left\| {\bf V}_\text{RF} {\bf V}_\text{BB} \right\|_\text{F}^2 = N_\text{s}$. Then, the signal received at the relay can be express as 
 \begin{figure*}[t] 
	\vskip 0.2in
	\begin{center}
		\centerline{\includegraphics[width=160mm]{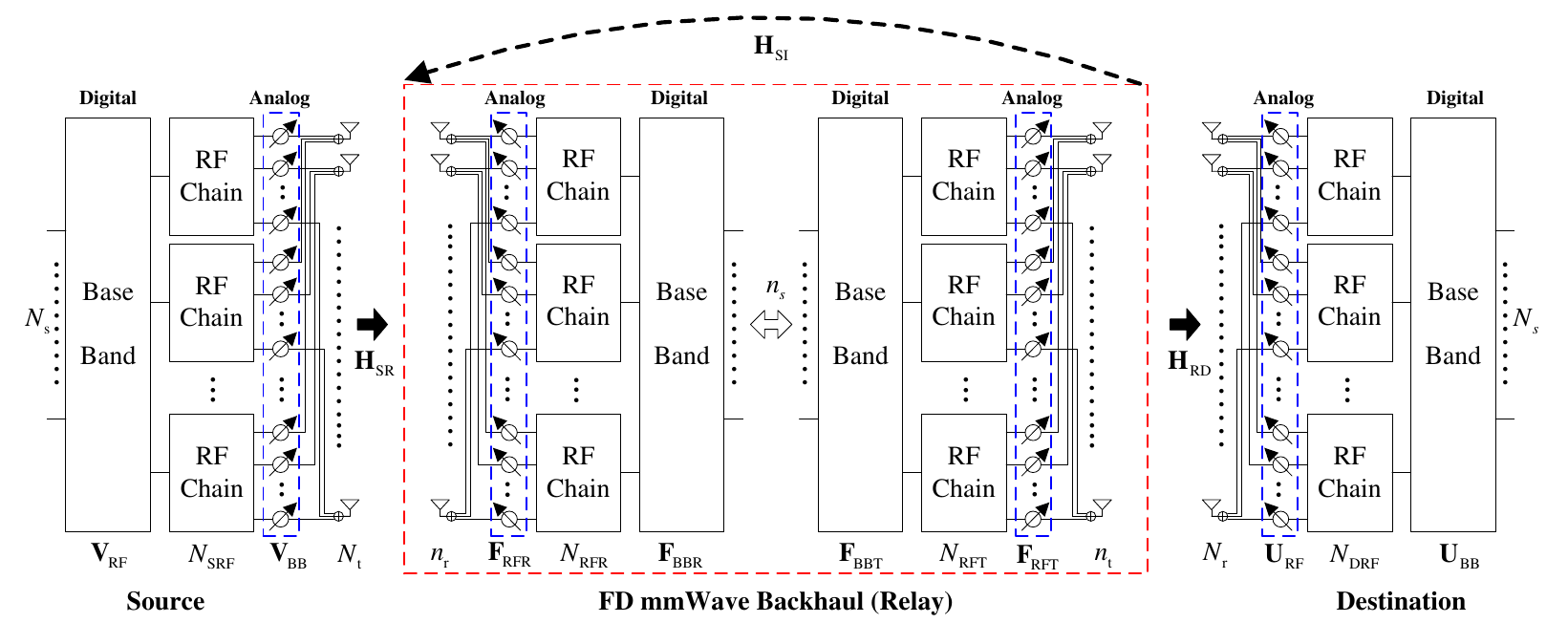}}
		\caption{ FD mmWave relay systems with hybrid analog and digital structure}
		\label{fig_model}
	\end{center}
	\vskip -0.2in
\end{figure*}

   \begin{equation}\label{receive_signal_r}
  {\bf y}_\text{R} = {\bf H}_\text{SR}  {\bf x}_\text{S} + {\bf H}_\text{SI}  {\bf x}_\text{R}+  {\bf n}_\text{R},
  \end{equation}
 where $ {\bf H}_\text{SR} \in \mathbb{C}^{n_\text{r} \times N_\text{t}}$ denotes the  source-to-relay channel matrix, $ {\bf H}_\text{SI} \in \mathbb{C}^{n_\text{r} \times n_\text{t}}$ denotes the SI channel of relay, ${\bf x}_\text{R} \in \mathbb{C}^{n_\text{t} \times1}$ denotes the transmitted signal at the relay, and ${\bf n}_\text{R} \sim \mathcal{CN}(0, \sigma_\text{n}^2 {\bf I}_{n_\text{R}})$  denotes the noise vector at the relay. At the relay, the received signal ${\bf y}_\text{R}$ is  firstly combined with the analog   precoding matrix $ {\bf F}_\text{RFR} \in \mathbb{C}^{n_\text{r} \times N_\text{RFR} }$ and digital precoding matrix $ {\bf F}_\text{BBR} \in \mathbb{C}^{N_\text{RFR}\times n_\text{s}}$. 
 Then, it is precoded by  digital precoding matrix $ {\bf F}_\text{BBT} \in \mathbb{C}^{N_\text{RFT} \times n_\text{s} }$ and analog   precoding matrix $ {\bf F}_\text{RFT} \in \mathbb{C}^{n_\text{t}\times N_\text{RFT} }$. Thus, the transmitted signal at the relay is expressed as 
    \begin{equation}\label{tran_signal_r}
 {\bf x}_\text{R} = \sqrt{P_\text{R}}{\bf F}_\text{RFT} {\bf F}_\text{BBT} {\bf F}_\text{BBR}^H  {\bf F}_\text{RFR}^H {\bf y}_\text{R},
   \end{equation}
 where ${P_\text{R}}$ is the transmit power of the relay. Plugging \eqref{tran_signal_s} and \eqref{receive_signal_r} into \eqref{tran_signal_r}, we obtain 
 \begin{equation}
 \begin{aligned}
{\bf x}_\text{R}） = &{\bf F}_\text{RFT} {\bf F}_\text{BBT}\left( {\bf I}_{n_\text{s}} - \sqrt{P_\text{R}}  {\bf F}_\text{BBR}^H {\bf F}_\text{RFR}^H{\bf H}_\text{SI} {\bf F}_\text{RFT} {\bf F}_\text{BBT} \right)^{-1} \\ 
&\times \sqrt{P_\text{R}}  {\bf F}_\text{BBR}^H {\bf F}_\text{RFR}^H \left( \sqrt{P_\text{S}} {\bf H}_\text{SR} {\bf V}_\text{RF} {\bf V}_\text{BB} {\bf{s}}_\text{S} +   {\bf n}_\text{R}   \right). 
 \end{aligned}
 \end{equation}
At the destination, the received signal is multiplied by the analog combining matrix $ {\bf U}_\text{RF} \in \mathbb{C}^{N_\text{r} \times N_\text{RFD} }$ and digital combining matrix  $ {\bf U}_\text{BB} \in \mathbb{C}^{N_\text{RFD} \times N_\text{s} }$, which is expressed as 
 \begin{equation}\label{receive_destination}
 \begin{split}
{\bf y}_\text{D}）=&
 \sqrt{ P_\text{S}P_\text{R} } {\bf U}^H{\bf H}_\text{RD} {\bf F}_\text{T} {\bf \Xi}_\text{R}^{-1} {\bf F}_\text{R}^H {\bf H}_\text{SR} {\bf V} {\bf{s}}_\text{S} \\
&+\sqrt{P_\text{R} } {\bf U}^H {\bf H}_\text{RD}  {\bf F}_\text{T} {\bf \Xi}_\text{R}^{-1} {\bf F}_\text{R}^H {\bf n}_\text{R} 
  + {\bf U}^H {\bf n}_\text{D}, 
 \end{split}
 \end{equation} 
where ${\bf V}={\bf V}_\text{RF}{\bf V}_\text{BB}$, ${\bf U} ={\bf U}_\text{RF}{\bf U}_\text{BB}$, ${\bf F}_\text{T} ={\bf F}_\text{RFT} {\bf F}_\text{BBT}$,  ${\bf F}_\text{R} ={\bf F}_\text{RFR} {\bf F}_\text{BBR}$, 
and ${\bf \Xi }_\text{R}=  {\bf I}_{n_\text{s}} - \sqrt{P_\text{R}}  {\bf F}_\text{R}^{H}  {\bf H}_\text{SI} {\bf F}_\text{T} $. $ {\bf H}_\text{RD} \in \mathbb{C}^{N_\text{r} \times n_\text{t}}$ denotes the  relay-to-destination channel matrix and 
${\bf n}_\text{D} \sim \mathcal{CN}(0, \sigma_\text{n}^2 {\bf I}_{N_\text{r}})$  denotes the noise vector at  destination.

From the above, the spectral efficiency of the system is given by 
\begin{equation}\label{SE}
 \begin{split}
R =&\log_2  \left| {\bf I}_{N_\text{s}} +\frac{P_\text{S}P_\text{R}}{N_\text{s}}{\bf \Sigma}^{-1} \left( {\bf U}^{H} {\bf H}_\text{RD} {\bf F}_\text{T} {\bf \Xi}_\text{R}^{-1} {\bf F}_\text{R}^{H} {\bf H}_\text{SR} {\bf V} \right) \right.\\
& \left. \times \left( {\bf U}^{H} {\bf H}_\text{RD} {\bf F}_\text{T} {\bf \Xi}_\text{R}^{-1} {\bf F}_\text{R}^{H} {\bf H}_\text{SR} {\bf V} \right)^{H}  \right|, 
 \end{split}
\end{equation}
where ${\bf \Sigma} = \sigma_\text{n}^2 \big[ {P_\text{R} } \left({\bf U}^{H} {\bf H}_\text{RD}  {\bf F}_\text{T} {\bf \Xi}_\text{R}^{-1} {\bf F}_\text{R}^{H} \right) \left({\bf U}^{H} {\bf H}_\text{RD}  {\bf F}_\text{T} {\bf \Xi}_\text{R}^{-1} {\bf F}_\text{R}^{H} \right)^{H} +{\bf U}^{H} {\bf U} \big]$ is the covariance matrix of the noise term in \eqref{receive_destination}.

\subsection{Channel model}
For the desired link channels (i.e., ${\bf H}_\text{SR}$  and ${\bf H}_\text{RD}$), we  assume that sufficient far-field conditions have been met, and employ 
the extended Saleh-Valenzuela mmWave channel model \cite{Ming2017xiao,zhang2019precoding} to characterize the limited scattering features of mmWave channels. This model is described  as the sum of the contributions from $N_\text{c}$ scattering clusters, each of which contributes  $N_\text{p}$ propagation paths. This model expresses the mmWave channel as 
\begin{equation}\label{channelmodel_SD}
{\bf H} =\sqrt{\frac{N_\text{R}N_\text{T}}{ N_\text{c} N_\text{p}}}\sum_{k=1}^{N_\text{c}} \sum_{l=1}^{N_\text{p}} \alpha_{k,l} {\bf{a}}(\theta_{k,l}^{\text{r}}) {\bf{a}}(\theta_{k,l}^{\text{t}})^{H},     
\end{equation}
 where $N_\text{T}$ denotes the number of  transmit antennas, $N_\text{R}$ denotes the number of  receive antennas, $\alpha_{k,l}$ denotes the complex gain of the $l$-th ray in the $k$-th propagation cluster.  The functions  ${\bf{a}}(\theta_{k,l}^{\text{r}})$ and $ {\bf{a}}(\theta_{k,l}^{\text{t}})^H$  respectively represent the normalized receive and  transmit  array response vectors, where $\theta_{k,l}^{\text{r}}$ and $\theta_{k,l}^{\text{t}}$ are the azimuth angles of arrival and departure, respectively. The array response can be expressed as    
\begin{equation}\label{ULA}
{\bf{a}}(\theta ) = \frac{1}{{\sqrt {{N}} }}{\left[1,{e^{ - j\frac{{2\pi d}}{\lambda }\sin(\theta) }},...,{e^{ - j({N} - 1)\frac{{2\pi d}}{\lambda }\sin(\theta) }}\right]^T},
\end{equation}
where $d$ is the antenna spacing and $\lambda$ is the carrier wave-length.  

\begin{figure*}[b]
 	\hrulefill 
 \setcounter{equation}{9}
\begin{equation}\label{distance_d}
d_{m,n}=\sqrt{(a_0+(m-1)d)^2 + (b_0+(n-1)d)^2 -2(a_0 +(m-1)d)(b_0+(n-1)d)\cos(\phi) }, 
\end{equation}  
\hrulefill 
 \setcounter{equation}{11}
\begin{equation}\label{Problem_Formulation}
\begin{split}
  \mathop {\max }\limits_{\scriptstyle {\bf V}_\text{RF}, {\bf U}_\text{RF},{\bf F}_\text{RFT},{\bf F}_\text{RFR} \hfill\atop
\scriptstyle {\bf V}_\text{BB}, {\bf U}_\text{BB},{\bf F}_\text{BBT},{\bf F}_\text{BBR}   \hfill}   
 &R = \log_2  \left| {\bf I}_{N_\text{s}} +\frac{P_\text{S}P_\text{R}}{N_\text{s}}\Sigma^{-1} \left[ {\bf U}^{H} {\bf H}_\text{RD} {\bf F}_\text{T}   {\bf F}_\text{R}^{H} {\bf H}_\text{SR} {\bf V} \right]  \left[ {\bf U}^{H} {\bf H}_\text{RD} {\bf F}_\text{T}   {\bf F}_\text{R}^{H} {\bf H}_\text{SR} {\bf V} \right]^{H}  \right|\\
  \text{s.t.}  \qquad~~ &{\bf V}_\text{RF}\in \mathcal{W}_\text{RF}, {\bf U}_\text{RF}\in \mathcal{G}_\text{RF}, {\bf F}_\text{RFR}\in \mathcal{F}_\text{RFR},  {\bf F}_\text{RFT}\in \mathcal{F}_\text{RFT},\\
&  \left\| {\bf F}_\text{RFT} {\bf F}_\text{BBT} \right\|_\text{F}^2 = N_\text{s},\\
& \left\| {\bf V}_\text{RF} {\bf V}_\text{BB} \right\|_\text{F}^2 = N_\text{s},\\
& {\bf F}_\text{R}^{H} {\bf H}_\text{SI} {\bf F}_\text{T} =0,
\end{split}
\end{equation}
\end{figure*}

Based on the widely used mmWave SI channel in \cite{zhang2019precoding,han2018precoding,satyanarayana2018hybrid}, the residual SI consists of two parts: 
 LOS SI and  NLOS SI.  With the high beam gain of mmWave signals,
the NLOS SI results from refection from nearby obstacles. We can utilize the far-filed clustered mmWave channel in  \eqref{channelmodel_SD} to model the NLOS SI channel. 
By contrast, in the FD scenario, since the transmitter and the local receiver are closely placed, LOS SI channels are near-field channels and clustered mmWave channels  cannot be used.
Therefore, a  more realistic   SI channel model, which
is the spherical wave propagation model,  is considered for the near-field LOS channel matrix. 
 According to \cite{zhang2019precoding,satyanarayana2018hybrid}, the LOS SI channel coefficient of the $m$-th row and $n$-th column entry is given by 
 \setcounter{equation}{8}
\begin{equation}
\left[{\bf H}_{\text{LOS}}\right]_{m,n} =\frac{\rho}{d_{m,n}} \exp\left(-j\frac{2\pi}{\lambda}d_{m,n}\right),
\end{equation}
where $\rho$ is a normalization constant such that $\mathbb{E} [  \left\|{\bf H}_{\text{LOS}} \right\|_\text{F}^2  ] = n_\text{t} n_\text{r} $ and $d_{m,n}$ is the distance from the $m$-th element of the transmit array to the $n$-th element of the receive array, expressed in \eqref{distance_d}, with $\phi$ corresponding to the angle between the two antenna arrays, and  $a_0$ and $b_0$ denoting the  initial distances of the antennas to a common reference point. Finally, the FD mmWave SI channel is constructed as  
\setcounter{equation}{10}
\begin{equation}
{\bf H}_{\text{SI}}=\kappa_{\text{LOS}} {\bf H}_{\text{LOS}} + \kappa_{\text{NLOS}} {\bf H}_{\text{NLOS}},
\end{equation}
where $\kappa_{\text{LOS}}$ and $\kappa_{\text{NLOS}}$ denote the intensity coefficients of LOS and NLOS components, respectively. 

\subsection{Problem formulation}
The main objective of this work is to maximize the system spectral efficiency by jointly designing  beamforming matrices (i.e., ${\bf V}_\text{RF}$, ${\bf V}_\text{BB}$, ${\bf F}_\text{RFT} $, ${\bf F}_\text{BBT}$,  ${\bf F}_\text{RFR}$, ${\bf F}_\text{BBR}$, ${\bf U}_\text{RF}$  and ${\bf U}_\text{BB}$)  and SIC in the presence of noisy  CSI. The  HBF design  problem of maximizing the  system spectral efficiency  is given by \eqref{Problem_Formulation}, where $\mathcal{V}_\text{RF}$, $\mathcal{U}_\text{RF}$, $\mathcal{F}_\text{RFR}$ and $\mathcal{F}_\text{RFT}$ are the feasible sets of the analog precoders included by unit modulus constraints. 
Due to the non-convex constraints of the analog precoders and  ${\bf F}_\text{R}^{H} {\bf H}_\text{SI} {\bf F}_\text{T} =0$, it is in general intractable to solve the optimization problem. 
Meanwhile,  the HBF design with conventional optimization-based approaches are not robust  in the presence of noisy CSI. To solve these problems, we first develop   algorithms that maximize the spectral efficiency with joint HBF and SIC  design. Then, learning machines (e.g., ELM and CNN) are adopted such that the hybrid beamformers are predicted by feeding the machines with noisy CSI.  

\section{FD mmWave beamforming design}\label{FD_mmWave_beamforming_design} 
In order to train the ELM and CNN learning machines, we first need to solve the optimization problem in \eqref{Problem_Formulation} to provide accurate labels of the training data samples.  Generally, to make the problem tractable and reduce the communication overhead, the beamformers can be designed individually at different nodes \cite{zhang2019precoding,han2018precoding}. The main challenge  is at the relay node since  the beamformers of the transmitting part and receiving part should be jointly designed, and  SIC needs to be guaranteed.  
Consequently, we first propose efficient algorithms to design the hybrid beamformers at the relay, and then HBF design at the source and destination can be obtained following similar approaches at relay. 

\subsection{SIC and HBF algorithm design}
According to the method studied in \cite{el2014spatially,lee2014af,tsinos2018hybrid}, the HBF design problem at the relay can be transferred  into minimizing the Frobenius norm of the difference between the optimal fully digital beamformer and hybrid beamformers as 
 \setcounter{equation}{12}
\begin{equation}\label{Problem_relay}
\begin{split}
(\text{P}1):~ \mathop {\min }\limits_{{\bf F}_\text{RFT}, {\bf F}_\text{BBT},{\bf F}_\text{RFR},{\bf F}_\text{BBR}}& \left\| {\bf F}_\text{opt} -  {\bf F}_\text{RFT} {\bf F}_\text{BBT} {\bf F}_\text{BBR}^{H} {\bf F}_\text{RFR}^{H} \right\|_\text{F}^2 \\
   \text{s.t.} ~~ \qquad   & {\bf F}_\text{RFR}\in \mathcal{F}_\text{RFR},  {\bf F}_\text{RFT}\in \mathcal{F}_\text{RFT},\\
   &\left\| {\bf F}_\text{RFT} {\bf F}_\text{BBT} \right\|_\text{F}^2 = n_\text{s},\\
 & {\bf F}_\text{BBR}^{H} {\bf F}_\text{RFR}^{H} {\bf H}_\text{SI} {\bf F}_\text{RFT} {\bf F}_\text{BBT} =0,
\end{split}
\end{equation}  
where  ${\bf F}_\text{opt} = {\bf F}_\text{TSVD} {\bf F}_\text{RSVD}^H $, ${\bf F}_\text{TSVD} $ and ${\bf F}_\text{RSVD}$ are formed by the  right-singular vectors of ${\bf H}_\text{RD} $  and  left-singular vectors of ${\bf H}_\text{SR} $, respectively. 

Due to the non-convex constraints, it is still hard to find a solution of problem (P1) that guarantees both SIC and  optimal hybrid beamformers. Moreover,  according to the results in \cite{zhang2019precoding, han2018precoding}, the spectral efficiency of FD mmWave systems degrades significantly if residual SI cannot be efficiently suppressed. Thus, we will solve this problem following two steps. In the first step, we mainly focus on perfectly eliminating  SI (i.e., ${\bf F}_\text{BBR}^{H} {\bf F}_\text{RFR}^{H} {\bf H}_\text{SI} {\bf F}_\text{RFT} {\bf F}_\text{BBT} =0$) by designing unconstrained beamformers for both transmitting and receiving parts. Then, the problem (P1) can be rewritten as  
\begin{equation}\label{Problem_relay_step1}
\begin{split}
(\text{P}2):~\mathop {\min }\limits_{{\bf F}_\text{T}, {\bf F}_\text{R} }~ &\left\| {\bf F}_\text{opt} -    {\bf F}_\text{T} {\bf F}_\text{R}^{H} \right\|_\text{F}^2 \\
   \quad \text{s.t.}  ~~  &{\bf F}_\text{R}^{H} {\bf H}_\text{SI} {\bf F}_\text{T} =0. \\
\end{split}
\end{equation} 
After solving (P2), we obtain the unconstrained beamformers that can ensure perfect SIC. In the second step, hybrid beamformers of the transmitting part and receiving part are jointly designed by minimizing the Frobenius norm of the difference between the unconstrained beamformers and hybrid beamformers. The problem is formulated as 
\begin{equation}\label{Problem_relay_step2}
\begin{split}
(\text{P}3):~ \mathop {\min }\limits_{{\bf F}_\text{RFT}, {\bf F}_\text{BBT},{\bf F}_\text{RFR},{\bf F}_\text{BBR}} ~&\left\| \hat{\bf F}_\text{opt} -  {\bf F}_\text{RFT} {\bf F}_\text{BBT} {\bf F}_\text{BBR}^{H} {\bf F}_\text{RFR}^{H} \right\|_\text{F}^2 \\
  \qquad \text{s.t.} ~~~~ \quad\quad    &{\bf F}_\text{RFR}\in \mathcal{F}_\text{RFR},  {\bf F}_\text{RFT}\in \mathcal{F}_\text{RFT},\\
  \qquad \quad \qquad   & \left\| {\bf F}_\text{RFT} {\bf F}_\text{BBT} \right\|_\text{F}^2 = n_\text{s},\\
\end{split}
\end{equation} 
where $\hat{\bf F}_\text{opt} = \hat{\bf F}_\text{T} \hat{\bf F}_\text{R}^{H}$ with $ \hat{\bf F}_\text{T}$ and $ \hat{\bf F}_\text{R}$ corresponding to the solutions of problem (P2). Though   problems (P2) and (P3) are simpler than the original problem (P1), both problems are still non-convex problems and efficient approaches should be proposed to solve them. 


Let us start with   problem (P2). It is obvious that this problem is convex when  one of the variables is fixed. This property enables ADMM utilization, and consequently, the augmented Lagrangian function of  (P2) is given by 
\begin{equation}
\mathcal{L}({\bf F}_\text{T},{\bf F}_\text{R}, {\bf Z})= \left\| {\bf F}_\text{opt} -    {\bf F}_\text{T} {\bf F}_\text{R}^{H} \right\|_\text{F}^2 + \varrho \left\| {\bf F}_\text{R}^{H} {\bf H}_\text{SI} {\bf F}_\text{T} + \frac{1}{\varrho} {\bf Z}  \right\|_\text{F}^2, 
\end{equation}
where ${\bf Z} \in \mathbb{C}^{n_\text{s} \times n_\text{s}}$ is the Lagrange multiplier matrix and $\varrho$ is the ADMM  step-size. According to ADMM approaches\cite{boyd}, the solution to (P2) can be obtained iteratively, where in iteration $i+1$, the variables and multiplier are updated as follows  
\begin{subequations}
	\begin{align}		
	{\bf F}_\text{T}^{(i+1)}:=& \arg \mathop{\min}\limits_{{\bf F}_\text{T}} \mathcal{L}\left({\bf F}_\text{T} ,{\bf F}_\text{R}^{(i)},{\bf Z}^{(i)}\right); \label{Problem_FT}\\
	{\bf F}_\text{R}^{(i+1)}:=& \arg \mathop{\min}\limits_{{\bf F}_\text{R}} \mathcal{L}\left({\bf F}_\text{T}^{(i+1)},{\bf F}_\text{R} ,{\bf Z}^{(i)} \right); \label{Problem_FR}\\
	{\bf Z}^{(i+1)} :=& {\bf Z}^{(i)} +\varrho {\bf F}_\text{R}^{H{(i+1)}} {\bf H}_\text{SI} {\bf F}_\text{T}^{(i+1)}.\label{eq_Z}
	\end{align}	
\end{subequations}

By solving the problems in \eqref{Problem_FT} and \eqref{Problem_FR},  we have the following theorem.

\begin{theorem}
The closed-form expressions for ${\bf F}_{\rm{T}}^{(i+1)}$ and  ${\bf F}_{\rm{R}}^{(i+1)}$ are receptively given by 
\begin{equation}\label{Problem_FT_solve}
\begin{split}
\text{vec}\left({\bf F}_{\rm{T}}^{(i+1)}\right)=&\left[ {\bf I}_{n_{\rm{s}}} \otimes \left( \varrho {\bf H}_{\rm{SI}}^{H} {\bf F}_{\rm{R}}^{(i)} {\bf F}_{\rm{R}}^{H(i)}  {\bf H}_{\rm{SI}} \right) +\left( {\bf F}_{\rm{R}}^{H(i)} {\bf F}_{\rm{R}}^{(i)} \right)  
  \otimes {\bf I}_{n_{\rm{t}}}        \right]^{-1} \\
&\times\text{vec}\left({\bf F}_{\rm{opt}} {\bf F}_{\rm{R}}^{(i)} - {\bf H}_{\rm{SI}}^{H} {\bf F}_{\rm{R}}^{(i)}  {\bf Z}^{(i)}  \right), \\
\end{split}
\end{equation}
and
\begin{equation}\label{Problem_FR_solve}
\begin{split}
&\text{vec}\left({\bf F}_{\rm{R}}^{(i+1)}\right)\\&=\left[ \left( \varrho {\bf H}_{\rm{SI}}  {\bf F}_{\rm{T}}^{(i+1)} {\bf F}_{\rm{T}}^{H(i+1)} {\bf H}_{\rm{SI}}^{H}\right)    \otimes {\bf I}_{n_{\rm{s}}}    +  {\bf I}_{n_{\rm{r}}}\otimes \left( {\bf F}_{\rm{T}}^{H(i+1)} {\bf F}_{\rm{T}}^{(i+1)}\right)  \right]^{-1}\\&~~~\times \text{vec}\left( {\bf F}_{\rm{T}}^{H(i+1)}  {\bf F}_{\rm{opt}} - {\bf Z}^{(i)} {\bf F}_{\rm{T}}^{H(i+1)}  {\bf H}_{\rm{SI}}^{H}  \right).  \\
\end{split}
\end{equation} 
\end{theorem}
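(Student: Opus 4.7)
The plan is to reduce each subproblem \eqref{Problem_FT} and \eqref{Problem_FR} to a generalized Sylvester equation of the form $AX + XB = C$ and then vectorize using the standard Kronecker identity $\mathrm{vec}(AXB) = (B^{T}\otimes A)\,\mathrm{vec}(X)$, which turns the matrix equation into a square linear system whose inverse gives the claimed closed form.

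First I would tackle ${\bf F}_\text{T}^{(i+1)}$. With ${\bf F}_\text{R}^{(i)}$ and ${\bf Z}^{(i)}$ held fixed, the augmented Lagrangian $\mathcal{L}$ is a strictly convex quadratic in ${\bf F}_\text{T}$, so its unique minimizer is characterized by vanishing of the Wirtinger derivative with respect to ${\bf F}_\text{T}^{\,*}$. Expanding each Frobenius norm as a trace and differentiating yields $-({\bf F}_\text{opt} - {\bf F}_\text{T}{\bf F}_\text{R}^{H(i)}){\bf F}_\text{R}^{(i)} + \varrho{\bf H}_\text{SI}^{H}{\bf F}_\text{R}^{(i)}({\bf F}_\text{R}^{H(i)}{\bf H}_\text{SI}{\bf F}_\text{T} + \varrho^{-1}{\bf Z}^{(i)}) = 0$, which rearranges to $A{\bf F}_\text{T} + {\bf F}_\text{T}B = C$ with $A = \varrho{\bf H}_\text{SI}^{H}{\bf F}_\text{R}^{(i)}{\bf F}_\text{R}^{H(i)}{\bf H}_\text{SI}$, $B = {\bf F}_\text{R}^{H(i)}{\bf F}_\text{R}^{(i)}$ (Hermitian), and $C = {\bf F}_\text{opt}{\bf F}_\text{R}^{(i)} - {\bf H}_\text{SI}^{H}{\bf F}_\text{R}^{(i)}{\bf Z}^{(i)}$. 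Applying vec to both sides and using the Kronecker identity converts this into the linear system $[\,{\bf I}_{n_\text{s}}\otimes A + B\otimes{\bf I}_{n_\text{t}}\,]\,\mathrm{vec}({\bf F}_\text{T}) = \mathrm{vec}(C)$, where the transpose in the identity is absorbed into $B$ using its Hermitian symmetry so that the coefficient matrix matches the one in \eqref{Problem_FT_solve}. Since both Kronecker summands are positive semidefinite and $B$ is positive definite whenever ${\bf F}_\text{R}^{(i)}$ has full column rank, the Kronecker-sum spectrum (pairwise sums of eigenvalues of $A$ and $B$) is strictly positive, the coefficient matrix is invertible, and left-multiplication by its inverse yields \eqref{Problem_FT_solve} directly.

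The derivation of \eqref{Problem_FR_solve} is structurally identical but mirrored. With ${\bf F}_\text{T}^{(i+1)}$ and ${\bf Z}^{(i)}$ held fixed, Wirtinger differentiation with respect to ${\bf F}_\text{R}^{\,*}$ and a Hermitian transposition produce a Sylvester equation in which $({\bf F}_\text{T}^{H(i+1)}{\bf F}_\text{T}^{(i+1)})$ multiplies from one side and $\varrho{\bf H}_\text{SI}{\bf F}_\text{T}^{(i+1)}{\bf F}_\text{T}^{H(i+1)}{\bf H}_\text{SI}^{H}$ from the other, with right-hand side ${\bf F}_\text{T}^{H(i+1)}{\bf F}_\text{opt} - {\bf Z}^{(i)}{\bf F}_\text{T}^{H(i+1)}{\bf H}_\text{SI}^{H}$. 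Vectorizing via the same Kronecker identity gives exactly the coefficient matrix of \eqref{Problem_FR_solve}; the order of the two Kronecker factors is swapped relative to the ${\bf F}_\text{T}$ case precisely because of this left--right interchange. The main obstacle I anticipate is the complex-derivative bookkeeping, since the SI-cancellation penalty couples each variable to its own conjugate transpose; one must keep careful track of which factor lands on which side of the Sylvester equation in order to match the Kronecker form in the theorem. Once that is handled, the positive-definiteness argument for invertibility and the final inversion step are routine.
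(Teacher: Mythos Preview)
Your approach is essentially the same as the paper's: differentiate the augmented Lagrangian, obtain the Sylvester-type stationarity equation, and vectorize with the Kronecker identity. The paper's proof is in fact terser than yours---it simply states the first-order condition ${\bf F}_\text{T}{\bf F}_\text{R}^{H(i)}{\bf F}_\text{R}^{(i)}+ \varrho {\bf H}_\text{SI}^{H} {\bf F}_\text{R}^{(i)} {\bf F}_\text{R}^{H(i)} {\bf H}_\text{SI}{\bf F}_\text{T} ={\bf F}_\text{opt}{\bf F}_\text{R}^{(i)}-{\bf H}_\text{SI}^{H} {\bf F}_\text{R}^{(i)} {\bf Z}^{(i)}$ and then invokes ``the vectorization property'' without further comment.

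One caveat: your justification that ``the transpose in the identity is absorbed into $B$ using its Hermitian symmetry'' is not quite right. For a complex Hermitian matrix $B={\bf F}_\text{R}^{H(i)}{\bf F}_\text{R}^{(i)}$ one has $B^{H}=B$, hence $B^{T}=B^{*}$, not $B^{T}=B$. Applying $\mathrm{vec}({\bf F}_\text{T}B)=(B^{T}\otimes{\bf I}_{n_\text{t}})\,\mathrm{vec}({\bf F}_\text{T})$ therefore produces $({\bf F}_\text{R}^{H(i)}{\bf F}_\text{R}^{(i)})^{T}\otimes{\bf I}_{n_\text{t}}$ rather than $({\bf F}_\text{R}^{H(i)}{\bf F}_\text{R}^{(i)})\otimes{\bf I}_{n_\text{t}}$ as written in \eqref{Problem_FT_solve}. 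This discrepancy is cosmetic (the paper's stated formula appears to carry the same slip, and the two matrices are elementwise conjugates, so the resulting linear system is equivalent), but you should not paper over it with an incorrect appeal to Hermitian symmetry---either carry the transpose explicitly or note that the formula in the theorem is stated up to conjugation.
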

\begin{proof}
We first solve problem \eqref{Problem_FT}. By taking the derivative of $\mathcal{L}\big({\bf F}_\text{T},{\bf F}_\text{R}^{(i)},{\bf Z}^{(i)}\big)$ over ${\bf F}_\text{T}$ and letting it to be zero, we have ${\bf F}_\text{T}{\bf F}_\text{R}^{H(i)}{\bf F}_\text{R}^{(i)}+ \varrho {\bf H}_\text{SI}^{H} {\bf F}_\text{R}^{(i)} {\bf F}_\text{R}^{H(i)} {\bf H}_\text{SI}{\bf F}_\text{T} ={\bf F}_\text{opt}{\bf F}_\text{R}^{(i)}-{\bf H}_\text{SI}^{H} {\bf F}_\text{R}^{(i)} {\bf Z}^{(i)}$. Then,  utilizing the  vectorization property, the result in \eqref{Problem_FT_solve} can be obtained. For the problem \eqref{Problem_FR}, we can use a similar approach of the problem \eqref{Problem_FT} and obtain ${\bf F}_\text{R}^{(i+1)}$ in \eqref{Problem_FR_solve}.
\end{proof}

The above alternating procedure is initialized by setting the entries of  matrices  ${\bf F}_\text{R}^{(0)}$ to random values, and multiplier ${\bf Z}^{(0)}$ to zeros. 
After obtaining the updated variables in each steps, we  summarize the ADMM-based  beamforming design and  SIC approach in Algorithm \ref{ALG1}.

\begin{algorithm}[t]\label{ALG1}
	\caption{ADMM-based beamforming and SI cancellation algorithm} 
	\begin{algorithmic}[1]
      	\STATE \textbf{Input}: ${\bf F}_\text{opt}$, ${\bf H}_\text{SI}$, $\varrho$;
      	\STATE \textbf{Output}: ${\bf F}_\text{T}$, ${\bf F}_\text{R}$;
		\STATE \textbf{Initialize}: ${\bf F}_\text{R}^{(0)}$, ${\bf Z}^{(0)}$ and  $i=0$;  
		\REPEAT
		\STATE \text{Update} ${\bf F}_\text{T}^{(i+1)}$ using \eqref{Problem_FT_solve};
		\STATE  \text{Update} ${\bf F}_\text{R}^{(i+1)}$ using \eqref{Problem_FR_solve};
		\STATE  \text{Update} ${\bf Z}^{(i+1)}$ using \eqref{eq_Z};
		\STATE $i\leftarrow i+1$;
		\UNTIL{the stopping criteria is  met.}
	\end{algorithmic} 
\end{algorithm}

After obtaining the solutions of problem (P2), we then turn to solve problem (P3). The main challenges for this problem are the constant modulus non-convex constraints and the joint optimization of transmitting and receiving hybrid beamformers.  Furthermore, since the  beamformers derived by Algorithm \ref{ALG1} for SIC may not be mutually orthogonal, many existing approaches  (e.g., PE-AltMin \cite{yu2016alternating}, GEVD \cite{lin2019hybrid} and methods in \cite{sohrabi2016hybrid}) cannot be  directly used for this case. To make this problem tractable and deal with the non-convex constraints, we utilize majorization-minimization (MM) methods \cite{sun2016majorization,wu2017transmit,arora2019hybrid}.
In stead of minimizing the original objective function directly, the MM procedure consists of two steps. In the first majorization step, we find an easy implemented surrogate function that should be a tight upper bound of the original objective function and exist closed-from minimizers. Then in the minimization step, we minimize the surrogate function with closed-from minimizers. 
To achieve a fast convergence rate, a surrogate function that tries to follow the shape of the objective function is preferable\cite{sun2016majorization}. 

 To jointly design the hybrid beamformers at transmitting and receiving parts with MM methods, we will solve problem (P3) based on the alternating minimization framework.  
We first solve problem (P3) for the transmitting precoder ${\bf F}_\text{RFT}$ by fixing ${\bf F}_\text{RFR}$, ${\bf F}_\text{BBR}$ and ${\bf F}_\text{BBT}$.  Problem (P3) can be rewritten as 
 \begin{equation}\label{Problem_relay_step2_1}
 \begin{split}
 (\text{P}4):~ &\mathop {\min }\limits_{{\bf F}_\text{RFT}} ~     \left\| \hat{\bf F}_\text{opt} -  {\bf F}_\text{RFT} {\bf Y}_\text{T}   \right\|_\text{F}^2 \\
  &  ~ \text{s.t.} ~~~~      {\bf F}_\text{RFT}\in \mathcal{F}_\text{RFT},\\
 \end{split}
 \end{equation} 
 where $ {\bf Y}_\text{T}={\bf F}_\text{BBT} {\bf F}_\text{BBR}^H {\bf F}_\text{RFR}^H$. Then, we rewrite the objective function of problem (P4) as  
 \begin{equation}\label{J_T}
 \begin{split}
 J({\bf F}_\text{RFT}; {\bf Y}_\text{T} )&=  \text{Tr}\left( \hat{\bf F}_\text{opt} \hat{\bf F}_\text{opt}^H\right)+ \text{Tr}\left( {\bf F}_\text{RFT}^H  {\bf F}_\text{RFT} {\bf Y}_\text{T} {\bf Y}_\text{T}^H  \right)\\
 &\quad - \text{Tr}\left( \hat{\bf F}_\text{opt} {\bf Y}_\text{T}^H {\bf F}_\text{RFT}^H \right)- \text{Tr}\left( {\bf F}_\text{RFT} \left(\hat{\bf F}_\text{opt} {\bf Y}_\text{T}^H\right)^H \right)\\
& \mathop  = \limits^{(a)} \text{Tr}\left( \hat{\bf F}_\text{opt} \hat{\bf F}_\text{opt}^H \right)+{\bf f}_\text{RFT}^H  {\bf Q}_\text{T} {\bf f}_\text{RFT} -2\text{Re}\left({\bf f}_\text{RFT}^H  {\bf e}_\text{T}\right),  
 \end{split}
 \end{equation} 
where ${\bf f}_\text{RFT}= \text{vec}({\bf F}_\text{RFT})$, ${\bf E}_\text{T}=\hat{\bf F}_\text{opt} {\bf Y}_\text{T}^H$, ${\bf e}_\text{T}=\text{vec}({\bf E}_\text{T})$, ${\bf Q}_\text{T}=({\bf Y}_\text{T}{\bf Y}_\text{T}^H  )^T \otimes {\bf I}_{n_\text{t}} $, and  (a) follows from the identity $\text{Tr}({\bf A}{\bf B}{\bf C}{\bf D})=\text{vec}({\bf A}^T)^T ( {\bf D}^T\otimes{\bf B}  )\text{vec}({\bf C})$.

To solve  problem (P4) with  MM methods, we should find a majorizer of  $J({\bf F}_\text{RFT}; {\bf Y}_\text{T} )$ according to the following lemma. 

\begin{lemma}\label{Quad} 
Let ${\bf Q} \in \mathbb{C}^{N\times N}$ and ${\bf S} \in \mathbb{C}^{N\times N}$ be two   Hermitian matrices satisfying
${\bf S} \ge {\bf Q}$. 
Then the quadratic function  ${\bf a}^H {\bf Q} {\bf a}$,  is majorized by 
${\bf a}^H {\bf S} {\bf a} + 2\text{Re}( {\bf a}^H ( {\bf Q} -{\bf S} ) {\bf a}_i )+{\bf a}_i^H ( {\bf S}-{\bf Q} ) {\bf a}_i $ at  point ${\bf a}_i\in \mathbb{C}^{N}$.
\end{lemma}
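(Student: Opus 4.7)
The plan is to verify that the candidate surrogate function, call it $g(\mathbf{a}\,|\,\mathbf{a}_i) = \mathbf{a}^H \mathbf{S}\mathbf{a} + 2\operatorname{Re}(\mathbf{a}^H(\mathbf{Q}-\mathbf{S})\mathbf{a}_i) + \mathbf{a}_i^H(\mathbf{S}-\mathbf{Q})\mathbf{a}_i$, meets the two conditions that define a majorizer of $f(\mathbf{a}) = \mathbf{a}^H\mathbf{Q}\mathbf{a}$ at the point $\mathbf{a}_i$: (i) tangency $g(\mathbf{a}_i\,|\,\mathbf{a}_i)=f(\mathbf{a}_i)$, and (ii) the global upper bound $g(\mathbf{a}\,|\,\mathbf{a}_i)\ge f(\mathbf{a})$ for every $\mathbf{a}\in\mathbb{C}^{N}$. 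Once these two properties are established, the lemma follows by definition of majorization, and no further structural argument is required.

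For step (i), I would substitute $\mathbf{a}=\mathbf{a}_i$ directly into $g$. Since $\mathbf{Q}-\mathbf{S}$ is Hermitian, the quantity $\mathbf{a}_i^H(\mathbf{Q}-\mathbf{S})\mathbf{a}_i$ is real, so the $2\operatorname{Re}(\cdot)$ simplifies to $2\mathbf{a}_i^H(\mathbf{Q}-\mathbf{S})\mathbf{a}_i$. The three terms then telescope: $\mathbf{a}_i^H\mathbf{S}\mathbf{a}_i + 2\mathbf{a}_i^H(\mathbf{Q}-\mathbf{S})\mathbf{a}_i + \mathbf{a}_i^H(\mathbf{S}-\mathbf{Q})\mathbf{a}_i = \mathbf{a}_i^H\mathbf{Q}\mathbf{a}_i = f(\mathbf{a}_i)$, giving the tangency.

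The main substance is step (ii). I would form the gap $g(\mathbf{a}\,|\,\mathbf{a}_i) - f(\mathbf{a})$ and show, by grouping terms, that it equals the Hermitian quadratic form $(\mathbf{a}-\mathbf{a}_i)^H(\mathbf{S}-\mathbf{Q})(\mathbf{a}-\mathbf{a}_i)$. Concretely, combining the $\mathbf{a}^H\mathbf{S}\mathbf{a}$ and $-\mathbf{a}^H\mathbf{Q}\mathbf{a}$ terms yields $\mathbf{a}^H(\mathbf{S}-\mathbf{Q})\mathbf{a}$; rewriting $2\operatorname{Re}(\mathbf{a}^H(\mathbf{Q}-\mathbf{S})\mathbf{a}_i)$ as $-2\operatorname{Re}(\mathbf{a}^H(\mathbf{S}-\mathbf{Q})\mathbf{a}_i)$ supplies the cross term; and the trailing $\mathbf{a}_i^H(\mathbf{S}-\mathbf{Q})\mathbf{a}_i$ completes the perfect-square expansion. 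By the hypothesis $\mathbf{S}\succeq\mathbf{Q}$, the matrix $\mathbf{S}-\mathbf{Q}$ is positive semi-definite, so the resulting quadratic form is non-negative for every $\mathbf{a}$.

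The only potential obstacle is purely bookkeeping: tracking the complex conjugates carefully when manipulating $\operatorname{Re}(\mathbf{a}^H(\mathbf{Q}-\mathbf{S})\mathbf{a}_i)$, since $\mathbf{a}^H(\mathbf{S}-\mathbf{Q})\mathbf{a}_i$ is not itself real in general. Using Hermiticity of $\mathbf{S}-\mathbf{Q}$ to write $\mathbf{a}_i^H(\mathbf{S}-\mathbf{Q})\mathbf{a} = \overline{\mathbf{a}^H(\mathbf{S}-\mathbf{Q})\mathbf{a}_i}$ makes the sum $\mathbf{a}^H(\mathbf{S}-\mathbf{Q})\mathbf{a}_i + \mathbf{a}_i^H(\mathbf{S}-\mathbf{Q})\mathbf{a} = 2\operatorname{Re}(\mathbf{a}^H(\mathbf{S}-\mathbf{Q})\mathbf{a}_i)$ appear cleanly in the expansion of $(\mathbf{a}-\mathbf{a}_i)^H(\mathbf{S}-\mathbf{Q})(\mathbf{a}-\mathbf{a}_i)$, which closes the argument.
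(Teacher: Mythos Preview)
Your argument is correct: verifying tangency by direct substitution and then showing the gap $g(\mathbf{a}\,|\,\mathbf{a}_i)-f(\mathbf{a})$ collapses to the positive semidefinite form $(\mathbf{a}-\mathbf{a}_i)^{H}(\mathbf{S}-\mathbf{Q})(\mathbf{a}-\mathbf{a}_i)$ is exactly the standard route, and your handling of the Hermitian cross terms is clean. The paper itself does not actually prove this lemma---it simply cites \cite{wu2017transmit}---so there is nothing further to compare; your write-up in fact supplies what the paper omits.
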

\begin{proof}
The proof can be found in \cite{wu2017transmit}.
\end{proof}
According to  Lemma \ref{Quad}, we can obtain a valid majorizer of  $J({\bf F}_\text{RFT}; {\bf Y}_\text{T} )$ at point ${\bf F}_\text{RFT}^{(i)} \in \mathcal{F}_\text{RFT}$ given by 
 \begin{equation}\label{J_T_major}
 \begin{split}
 &\bar J\left({\bf F}_\text{RFT}; {\bf Y}_\text{T},{\bf F}_\text{RFT}^{(i)} \right)\\
  = & \text{Tr}\left( \hat{\bf F}_\text{opt} \hat{\bf F}_\text{opt}^H\right)+\lambda_\text{T} {\bf f}_\text{RFT}^H  {\bf f}_\text{RFT} + 2\text{Re}\left( {\bf f}_\text{RFT}^H \left( {\bf Q}_\text{T} -\lambda_\text{T}{\bf I}\right) {\bf f}_\text{RFT}^{(i)} \right)\\
  & +{\bf f}_\text{RFT}^{H(i)}\left( \lambda_\text{T}{\bf I} -{\bf Q}_\text{T}  \right){\bf f}_\text{RFT}^{(i)} -2\text{Re}\left({\bf f}_\text{RFT}^H  {\bf e}_\text{T}\right)\\
  = & 2\text{Re}\left( {\bf f}_\text{RFT}^H \left( \left( {\bf Q}_\text{T} -\lambda_\text{T}{\bf I}\right){\bf f}_\text{RFT}^{(i)} -  {\bf e}_\text{T}   \right) \right) +C_\text{T},
 \end{split}
 \end{equation} 
where ${\bf F}_\text{RFT}^{(i)}$ is the iterate available at $i$-th iteration, $\lambda_\text{T}$ denotes the maximum eigenvalue of  ${\bf Q}_\text{T}$, and the constant term $C_\text{T}=\text{Tr}( \hat{\bf F}_\text{opt} \hat{\bf F}_\text{opt}^H)+\lambda_\text{T} {\bf f}_\text{RFT}^H  {\bf f}_\text{RFT}+{\bf f}_\text{RFT}^{H(i)}( \lambda_\text{T}{\bf I} -{\bf Q}_\text{T}  ){\bf f}_\text{RFT}^{(i)}$.  Thus, we can guarantee $\lambda_\text{T}{\bf I} \ge {\bf Q}_\text{T}$. 
Then,  utilizing the majorizer in \eqref{J_T_major}, the solution of problem (P4) can be obtained by  iteratively solving the following problem 
  \begin{equation}\label{Problem_relay_step2_12}
  \begin{split}
  (\text{P}5):~ &{\bf F}_\text{RFT}^{(i+1)}=\arg \mathop {\min }\limits_{{\bf F}_\text{RFT}} ~ \bar J\left({\bf F}_\text{RFT}; {\bf Y}_\text{T},{\bf F}_\text{RFT}^{(i)} \right) \\
   &  ~~ \text{s.t.} ~~~   {\bf F}_\text{RFT}\in \mathcal{F}_\text{RFT}.\\
  \end{split}
  \end{equation} 
The close-form solution of problem (P5) is given by 
\begin{equation}\label{frft}
{\bf f}_\text{RFT}^{(i+1)}=-\exp\left(j\arg\left( \left( {\bf Q}_\text{T} -\lambda_\text{T}{\bf I}\right){\bf f}_\text{RFT}^{(i)} -  {\bf e}_\text{T}   \right)\right).
\end{equation}

Similarly, we can solve problem (P3) for the receiving combiner  ${\bf F}_\text{RFR}$ by fixing ${\bf F}_\text{RFT}$, ${\bf F}_\text{BBR}$ and ${\bf F}_\text{BBT}$.  Then, problem (P3) can be rewritten as 
 \begin{equation}\label{Problem_relay_step2_2}
 \begin{split}
 (\text{P}6):~ &\mathop {\min }\limits_{{\bf F}_\text{RFR}} ~ \text{Tr}\left( \hat{\bf F}_\text{opt} \hat{\bf F}_\text{opt}^H\right)+{\bf f}_\text{RFR}^H  {\bf Q}_\text{R} {\bf f}_\text{RFR} -2\text{Re}\left({\bf f}_\text{RFR}^H  {\bf e}_\text{R}\right) \\
  &  ~~ \text{s.t.} ~~~     {\bf F}_\text{RFR}\in \mathcal{F}_\text{RFR},\\
 \end{split}
 \end{equation} 
 where 
  ${\bf Q}_\text{R}=({\bf Y}_\text{R}^H {\bf Y}_\text{R}   )^T \otimes {\bf I}_{n_\text{r}}$, ${\bf Y}_\text{R} = {\bf F}_\text{RFT} {\bf F}_\text{BBT} {\bf F}_\text{BBR}^H$,  ${\bf f}_\text{RFR}= \text{vec}({\bf F}_\text{RFR})$, ${\bf E}_\text{R}=\hat{\bf F}_\text{opt}^H {\bf Y}_\text{R} $, and  ${\bf e}_\text{R}=\text{vec}({\bf E}_\text{R})$. According to  Lemma \ref{Quad}, we can obtain a valid majorizer of the objective function in \eqref{Problem_relay_step2_2} at point ${\bf F}_\text{RFR}^{(i)} \in \mathcal{F}_\text{RFR}$, which is given by 
 \begin{equation}\label{J_R_major}
 \begin{split}
\tilde  J\left({\bf F}_\text{RFR}; {\bf Y}_\text{R}, {\bf F}_\text{RFR}^{(i)} \right) =  2\text{Re}\left( {\bf f}_\text{RFR}^H \left( \left( {\bf Q}_\text{R} -\lambda_\text{R}{\bf I}\right){\bf f}_\text{RFR}^{(i)} -  {\bf e}_\text{R}   \right) \right) +C_\text{R},
 \end{split}
 \end{equation} 
where $\lambda_\text{R}$ denotes the maximum eigenvalue of  ${\bf Q}_\text{R}$ and the constant term $C_\text{R}=\text{Tr}( \hat{\bf F}_\text{opt} \hat{\bf F}_\text{opt}^H)+\lambda_\text{R} {\bf f}_\text{RFR}^H  {\bf f}_\text{RFR}+{\bf f}_\text{RFR}^{H(i)}( \lambda_\text{R}{\bf I} -{\bf Q}_\text{R}  ){\bf f}_\text{RFR}^{(i)}$.  
Following  a  similar procedure of solving problem (P4), the solution of problem (P6) can be obtained by  iteratively  updating  $ {\bf F}_\text{RFR}$ according to the following close-form expression  
\begin{equation}\label{frfr}
{\bf f}_\text{RFR}^{(i+1)}=-\exp\left(j\arg\left( \left( {\bf Q}_\text{R} -\lambda_\text{R}{\bf I}\right){\bf f}_\text{RFT}^{(i)} -  {\bf e}_\text{R}   \right)\right).
\end{equation}

Then, we turn to design digital beamformers (i.e., ${\bf F}_\text{BBR}$ and ${\bf F}_\text{BBT}$) with fixed analog beamformers (i.e., ${\bf F}_\text{RFR}$ and ${\bf F}_\text{RFT}$). By fixing ${\bf F}_\text{BBT}$, ${\bf F}_\text{RFR}$ and ${\bf F}_\text{RFT}$, a  globally optimal  solution of problem (P3) is given by
 \begin{equation}\label{fbbr}
{\bf F}_\text{BBR} = {\bf F}_\text{RFR}^{-1} \hat{\bf F}_\text{opt}^H \left( {\bf F}_\text{BBT}^H {\bf F}_\text{RFT}^H  \right)^{-1}. 
 \end{equation}
Similarly, by fixing ${\bf F}_\text{BBR}$, ${\bf F}_\text{RFR}$ and ${\bf F}_\text{RFT}$, a solution of problem (P3) without considering the power constraint in \eqref{Problem_relay_step2} is given by
 \begin{equation}\label{fbbt}
{\bf F}_\text{BBT} = {\bf F}_\text{RFT}^{-1} \hat{\bf F}_\text{opt} \left( {\bf F}_\text{BBR}^H {\bf F}_\text{RFR}^H  \right)^{-1}. 
 \end{equation} 
Then, to satisfy the power constraint in problem (P3), we can normalize ${\bf F}_\text{BBT}$ by a factor of $\frac{\sqrt{n_\text{s}}}{    \left\| {\bf F}_\text{RFT} {\bf F}_\text{BBT}    \right\|_\text{F}} $\cite{yu2016alternating,zhang2019precoding}. Letting $J({\bf F}_\text{RFT},{\bf F}_\text{BBT}, {\bf F}_\text{BBR}, {\bf F}_\text{RFR})$ denote the objective function of problem (P3), the effectiveness  of the normalization step is shown in the following remark.
\begin{remark}\label{remark1}
If $J({\bf F}_{\rm{RFT}},{\bf F}_{\rm{BBT}}, {\bf F}_{\rm{BBR}}, {\bf F}_{\rm{RFR}}) \le \delta$ when ignoring the power constraint in \eqref{Problem_relay_step2}, $J({\bf F}_{\rm{RFT}}, \hat {\bf F}_{\rm{BBT}}, {\bf F}_{\rm{BBR}}, {\bf F}_{\rm{RFR}}) \le 4\delta$, where $\hat {\bf F}_{\rm{BBT}} =\frac{\sqrt{n_\text{s}}}{    \left\| {\bf F}_\text{RFT} {\bf F}_\text{BBT}    \right\|_\text{F}}   {\bf F}_{\rm{BBT}}$.
\end{remark}
\begin{proof}
The proof of Remark 1 is omitted here since it is similar to that in  \cite{yu2016alternating}. 
\end{proof}
In Remark 1, it demonstrates that after minimizing the objective function of (P3) to a sufficiently  small value $\delta$ when ignoring the power constraint in \eqref{Problem_relay_step2}, the normalization step will also guarantee a small  value $4\delta$ for  minimizing the objective function. 

With above close-form solutions in \eqref{frft}, \eqref{frfr}, \eqref{fbbr} and \eqref{fbbt}, the MM based HBF design for both transmuting and receiving parts at  relay is summarized in Algorithm \ref{ALG2}.

 \begin{algorithm}[t]\label{ALG2}
	\caption{MM-based HBF  algorithm} 
	\begin{algorithmic}[1]
	    \STATE \textbf{Input}: $\hat{\bf F}_\text{opt}$;
      	\STATE \textbf{Output}: ${\bf F}_\text{RFT}$, ${\bf F}_\text{BBT}$, ${\bf F}_\text{RFR}$, ${\bf F}_\text{BBR}$; 
		\STATE \textbf{Initialize}: ${\bf F}_\text{RFT}^{(0)} $,  ${\bf F}_\text{RFR}^{(0)}$, ${\bf F}_\text{BBR}^{(0)}$ and outer iteration $k_\text{o}=0$;   
		\REPEAT
		\STATE  Fix ${\bf F}_\text{RFT}^{(k_\text{o})}$, ${\bf F}_\text{RFR}^{(k_\text{o})}$ and  ${\bf F}_\text{BBR}^{(k_\text{o})}$,  compute    ${\bf F}_\text{BBT}^{(k_\text{o}+1)} $ \\according   to \eqref{fbbt};
		\STATE  Use MM method to compute ${\bf F}_\text{RFT}^{(k_\text{o}+1)}$:
			    	\STATE \textbf{Initialize}: ${\bf F}_\text{RFT}^{(0)}={\bf F}_\text{RFT}^{(k_\text{o})}$,  and inner iteration $k_\text{i}=0$;   
					\REPEAT 
					\STATE   Compute ${\bf F}_\text{RFT}^{(k_\text{i}+1)}$ according to \eqref{frft};
					\STATE   $k_\text{i} \leftarrow k_\text{i}+1$;
					\UNTIL{the stopping criteria is  met.}
					\STATE \textbf{Update}: ${\bf F}_\text{RFT}^{(k_\text{o}+1)}={\bf F}_\text{RFT}^{(k_\text{i})}$;   
		\STATE  Fix ${\bf F}_\text{RFT}^{(k_\text{o}+1)}$, ${\bf F}_\text{BBT}^{(k_\text{o}+1)}$  and  ${\bf F}_\text{RFR}^{(k_\text{o})}$,  compute    ${\bf F}_\text{BBR}^{(k_\text{o}+1)} $ according   to \eqref{fbbr};
		\STATE  Use MM method  to compute ${\bf F}_\text{RFR}^{(k_\text{o}+1)}$:
			    	\STATE \textbf{Initialize}: ${\bf F}_\text{RFR}^{(0)}={\bf F}_\text{RFR}^{(k_\text{o})}$,  and inner iteration $k_\text{i}=0$;   
					\REPEAT
					\STATE   Compute ${\bf F}_\text{RFR}^{(k_\text{i}+1)}$ according to \eqref{frfr};
					\STATE   $k_\text{i} \leftarrow k_\text{i}+1$;
					\UNTIL{the stopping criteria is  met.}	
					\STATE \textbf{Update}: ${\bf F}_\text{RFR}^{(k_\text{o}+1)}={\bf F}_\text{RFR}^{(k_\text{i})}$;   											
		\STATE $k_\text{o} \leftarrow k_\text{o}+1$;
		\UNTIL{the stopping criteria is  met.}
		\STATE Compute ${\bf F}_\text{BBT}= \frac{\sqrt{n_\text{s}}}{ \left\| {\bf F}_\text{RFT} {\bf F}_\text{BBT} \right\|_\text{F} }  {\bf F}_\text{BBT}$. 
	\end{algorithmic} 
\end{algorithm}

\subsection{Convergence analysis and computational complexity}

The general formulation for problem (P2) can be given as
\begin{equation}\label{22}
    \min_{\bm{x},\bm{y}}~F(\bm{x},\bm{y}),~~\text{s.t.}~G(\bm{x},\bm{y})=0,
\end{equation}
where $F(\cdot,\cdot)$ is bi-convex and $G(\cdot,\cdot)$ is bi-affine\footnote{In other words, for any fixed $\bm{x},\bm{y}$, $F(\cdot,\bm{y})$ and $F( \bm{x},\cdot)$ are convex; while $G(\cdot,\bm{y})$ and $G(\bm{x},\cdot)$ are affine.}. According to \cite{boyd}, ADMM can be applied to solve problem (\ref{22}). 
 The convergence of Algorithm 1 is under research and would not be analyzed in this paper. Instead, we will show the convergence of Algorithm 1 by simulation result in Section \ref{Num_alg_per}.
We then summarize the   main complexity   of Algorithm 1 in the following theorem. 

\begin{theorem}
If $N_\text{t}=N_\text{r}= n_\text{t}= n_\text{r}$, the main complexity of Algorithm 1 is  $\mathcal{O}( 2 K_\text{A}  (n_\text{s}^3+1) N_\text{t}^3 )$, where $K_\text{A}$ is the number of iterations.
\end{theorem}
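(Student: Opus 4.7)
The plan is to analyze, per ADMM iteration, the dominant floating-point cost of each of the three updates in Algorithm 1, and then multiply by $K_\text{A}$. The two nontrivial updates are \eqref{Problem_FT_solve} and \eqref{Problem_FR_solve}; the multiplier update \eqref{eq_Z} only involves $\mathcal{O}(n_\text{s} N_\text{t}^2)$ work and will be absorbed into lower-order terms.

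First I would focus on \eqref{Problem_FT_solve}. Under the assumption $N_\text{t}=N_\text{r}=n_\text{t}=n_\text{r}$, assembling the Kronecker-sum matrix inside the brackets requires (i) forming ${\bf F}_\text{R}^{H(i)}{\bf F}_\text{R}^{(i)}$, an $n_\text{s}\times n_\text{s}$ matrix, and (ii) forming the $n_\text{t}\times n_\text{t}$ block $\varrho{\bf H}_\text{SI}^H{\bf F}_\text{R}^{(i)}{\bf F}_\text{R}^{H(i)}{\bf H}_\text{SI}$; grouping the products in the order $\bigl({\bf H}_\text{SI}^H({\bf F}_\text{R}^{(i)}{\bf F}_\text{R}^{H(i)})\bigr){\bf H}_\text{SI}$ costs $\mathcal{O}(N_\text{t}^3)$. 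After these matrix products, the Kronecker-sum matrix has size $n_\text{s}N_\text{t}\times n_\text{s}N_\text{t}$, so inverting it (or, equivalently, solving the linear system) costs $\mathcal{O}((n_\text{s}N_\text{t})^3)=\mathcal{O}(n_\text{s}^3 N_\text{t}^3)$. Assembling the right-hand side and performing the final matrix-vector multiplication are of lower order. Adding the dominant terms gives $\mathcal{O}\bigl((n_\text{s}^3+1)N_\text{t}^3\bigr)$ per $ {\bf F}_\text{T}$ update.

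Next I would apply the identical analysis to \eqref{Problem_FR_solve}, noting the symmetry of its structure: its Kronecker-sum inner matrices have sizes $n_\text{r}\times n_\text{r}$ and $n_\text{s}\times n_\text{s}$, the inversion is again of an $n_\text{s}N_\text{t}\times n_\text{s}N_\text{t}$ matrix, and the cost is again $\mathcal{O}\bigl((n_\text{s}^3+1)N_\text{t}^3\bigr)$. Summing the $ {\bf F}_\text{T}$, $ {\bf F}_\text{R}$ and $ {\bf Z}$ costs, a single outer iteration requires $\mathcal{O}\bigl(2(n_\text{s}^3+1)N_\text{t}^3\bigr)$ operations; multiplying by the $K_\text{A}$ iterations yields the claim.

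The only mildly subtle step will be justifying that the $\mathcal{O}(n_\text{s}^3 N_\text{t}^3)$ inversion cost is the dominant term (rather than, say, an $\mathcal{O}(N_\text{t}^4)$ cost hidden in matrix products) and that the $+1$ in $(n_\text{s}^3+1)$ is captured correctly by the $N_\text{t}^3$ contribution from evaluating the Hermitian block $ {\bf H}_\text{SI}^H {\bf F}_\text{R} {\bf F}_\text{R}^H {\bf H}_\text{SI}$; with a careful parenthesization of the matrix products this accounting is tight. Once these costs per iteration are established, the final bound follows immediately.
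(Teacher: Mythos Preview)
Your per-iteration accounting is correct and yields the stated bound, but the decomposition you use differs from the paper's. You attribute the ``$+1$'' in $(n_\text{s}^3+1)$ to the $\mathcal{O}(N_\text{t}^3)$ cost of assembling the block ${\bf H}_\text{SI}^H{\bf F}_\text{R}{\bf F}_\text{R}^H{\bf H}_\text{SI}$ (and its counterpart in the ${\bf F}_\text{R}$ update), whereas the paper attributes it to the SVD step that produces ${\bf F}_\text{opt}={\bf F}_\text{TSVD}{\bf F}_\text{RSVD}^H$ from the channels ${\bf H}_\text{SR}$ and ${\bf H}_\text{RD}$: under the equal-antenna assumption each SVD costs $\mathcal{O}(N_\text{t}^3)$, giving the $2N_\text{t}^3$ contribution that, together with the $2n_\text{s}^3N_\text{t}^3$ inversion cost, yields $2(n_\text{s}^3+1)N_\text{t}^3$ per iteration. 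Your route has the advantage of staying strictly inside the iterative updates (and indeed ${\bf F}_\text{opt}$ is listed as an \emph{input} to Algorithm~1, so the SVD is arguably preprocessing); the paper's route is simpler to state because it ignores the block-assembly costs altogether and just pairs ``two SVDs $+$ two Kronecker inversions.'' One minor caution: the block ${\bf H}_\text{SI}^H{\bf F}_\text{R}{\bf F}_\text{R}^H{\bf H}_\text{SI}$ can actually be formed in $\mathcal{O}(n_\text{s}N_\text{t}^2)$ by grouping as $({\bf H}_\text{SI}^H{\bf F}_\text{R})({\bf H}_\text{SI}^H{\bf F}_\text{R})^H$, so your chosen parenthesization is what makes that term $\Theta(N_\text{t}^3)$; this is fine for an upper bound, but worth noting if you later want a tight operation count.
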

\begin{proof}
For Algorithm 1,  the main complexity in each iteration includes two parts: 

1)  Derive ${\bf F}_\text{opt}$ based on the  singular value decomposition   of two channel matrices (i.e., ${\bf H}_\text{SR}$ and ${\bf H}_\text{RD}$). According to \cite{comon1990tracking},  the main complexity in this part is $\mathcal{O}( \max( N_\text{t}, n_\text{r}) \min( N_\text{t}, n_\text{r})^2 +\max( N_\text{r}, n_\text{t}) \min( N_\text{r}, n_\text{t})^2) $. 

 2) Compute $ {\bf F}_\text{T}$ and $ {\bf F}_\text{R}$ according to \eqref{Problem_FT_solve} and \eqref{Problem_FR_solve}, respectively. The main complexity in this part comes from the inversion operations in \eqref{Problem_FT_solve} and \eqref{Problem_FR_solve}, which is $\mathcal{O}( n_\text{s}^3 (n_\text{t}^3 + n_\text{r}^3))$. 
 
Thus, the main complexity of Algorithm 1  is given by  $\mathcal{O}( K_\text{A}( \max( N_\text{t}, n_\text{r}) \min( N_\text{t}, n_\text{r})^2 +\max( N_\text{r}, n_\text{t}) \min( N_\text{r}, n_\text{t})^2+ n_\text{s}^3 (n_\text{t}^3 + n_\text{r}^3) ))$. 
\end{proof}

The convergence and main complexity of Algorithm 2 are summarized in the following theorem. 
\begin{theorem}\label{theorem_alg2}
The convergence of Algorithm 2 is guaranteed. If $n_\text{t}=n_\text{r}$ and $n_\text{s}=N_\text{RFT}= N_\text{RFR}$, the main complexity of Algorithm 2  is $\mathcal{O}( 2K_\text{out}(K_\text{in} n_\text{t}^3 N_\text{RFT}^3  +  n_\text{t}N_\text{RFT}^2 + n_\text{t}n_\text{s}^2  )   )$, where $ K_\text{out}$ and $K_\text{in}$ are the numbers of outer  and inner iterations, respectively
\end{theorem}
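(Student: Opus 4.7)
The plan is to prove the convergence and the complexity bound separately, with the convergence argument leaning on standard properties of MM combined with block coordinate descent, and the complexity argument reducing to a bookkeeping exercise on the closed‑form updates \eqref{frft}, \eqref{frfr}, \eqref{fbbr} and \eqref{fbbt}.

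For convergence, I would first consider the inner MM loops that update the analog beamformers. By Lemma \ref{Quad}, the function $\bar J({\bf F}_\text{RFT};{\bf Y}_\text{T},{\bf F}_\text{RFT}^{(i)})$ is a valid majorizer of $J({\bf F}_\text{RFT};{\bf Y}_\text{T})$ at ${\bf F}_\text{RFT}^{(i)}$, and ${\bf f}_\text{RFT}^{(i+1)}$ in \eqref{frft} is its global minimizer over the unit‑modulus feasible set (this minimizer is the one that aligns the phase of each entry to cancel the linear term, which is the key step I would verify explicitly). The standard MM monotonicity inequality
\begin{equation*}
J({\bf F}_\text{RFT}^{(i+1)};{\bf Y}_\text{T}) \le \bar J({\bf F}_\text{RFT}^{(i+1)};{\bf Y}_\text{T},{\bf F}_\text{RFT}^{(i)}) \le \bar J({\bf F}_\text{RFT}^{(i)};{\bf Y}_\text{T},{\bf F}_\text{RFT}^{(i)}) = J({\bf F}_\text{RFT}^{(i)};{\bf Y}_\text{T})
\end{equation*}
then gives a monotonically non‑increasing sequence of objective values; the same argument applies verbatim to the ${\bf F}_\text{RFR}$ inner loop. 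At the outer level, each digital update \eqref{fbbt}, \eqref{fbbr} is a globally optimal unconstrained least‑squares solution (followed only by a post‑hoc normalization whose effect is controlled by Remark \ref{remark1}), so each of the four block updates in Algorithm \ref{ALG2} weakly decreases the common objective of (P3). Since that objective is bounded below by zero, the produced sequence of objective values converges, which is the sense of convergence claimed.

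For complexity, I would count operations per outer iteration. The outer loop performs two digital updates and two analog (MM) updates. Each digital update \eqref{fbbt}/\eqref{fbbr} requires two inversions of $N_\text{RFT} \times N_\text{RFT}$ matrices together with $\mathcal{O}(n_\text{t} n_\text{s}^2)$ work for the matrix products, which under the assumption $n_\text{s}=N_\text{RFT}=N_\text{RFR}$ gives $\mathcal{O}(n_\text{t} N_\text{RFT}^2 + n_\text{t} n_\text{s}^2)$. Each analog update runs $K_\text{in}$ MM inner steps; the dominant cost per inner step is forming $({\bf Q}_\text{T}-\lambda_\text{T}{\bf I}){\bf f}_\text{RFT}^{(i)}$, and because ${\bf Q}_\text{T}$ is an $n_\text{t}N_\text{RFT} \times n_\text{t}N_\text{RFT}$ Kronecker‑structured matrix, the matrix–vector product costs $\mathcal{O}(n_\text{t}^3 N_\text{RFT}^3)$ when treated generically (the maximum eigenvalue $\lambda_\text{T}$ is computed once). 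Summing the two analog and two digital updates and multiplying by $K_\text{out}$ outer iterations yields
\begin{equation*}
\mathcal{O}\bigl(2 K_\text{out}\bigl(K_\text{in}\, n_\text{t}^3 N_\text{RFT}^3 + n_\text{t}N_\text{RFT}^2 + n_\text{t}n_\text{s}^2\bigr)\bigr),
\end{equation*}
which matches the stated bound.

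The main obstacle is the convergence half: strictly speaking, monotone decrease of the objective only gives convergence of the objective values, not of the iterate sequence. I would therefore state the conclusion at the level of objective‑value convergence (which is standard in the MM and alternating‑minimization literature and appears to be what the theorem intends); a stronger claim on iterates would require additional assumptions such as compactness of the feasible set (true here because the analog feasible sets are products of unit circles) plus a Kurdyka–Łojasiewicz‑type argument, which I would only sketch as a remark. The complexity half is essentially mechanical once one commits to the Kronecker‑product representation of ${\bf Q}_\text{T}$ and ${\bf Q}_\text{R}$.
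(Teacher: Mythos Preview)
Your convergence argument is essentially the paper's: both use the majorization inequality from Lemma~\ref{Quad} to obtain a non-increasing inner MM sequence, then chain inequalities across the four block updates at the outer level, and invoke lower-boundedness of the objective. The paper, like you, stops at objective-value convergence, so your caveat about iterate convergence is well placed and in fact slightly more careful than the paper.

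Your complexity bookkeeping, however, contains an error. You attribute the dominant $\mathcal{O}(n_\text{t}^3 N_\text{RFT}^3)$ cost per inner MM step to the matrix--vector product $({\bf Q}_\text{T}-\lambda_\text{T}{\bf I}){\bf f}_\text{RFT}^{(i)}$, but a matrix--vector product with an $m\times m$ matrix costs $\mathcal{O}(m^2)$, i.e.\ $\mathcal{O}(n_\text{t}^2 N_\text{RFT}^2)$ here, not $\mathcal{O}(n_\text{t}^3 N_\text{RFT}^3)$; the Kronecker structure of ${\bf Q}_\text{T}$ would in fact make it cheaper still. The paper places the cubic cost on computing the maximum eigenvalue $\lambda_\text{T}$ of the $n_\text{t}N_\text{RFT}\times n_\text{t}N_\text{RFT}$ matrix ${\bf Q}_\text{T}$ (and likewise $\lambda_\text{R}$ of ${\bf Q}_\text{R}$), which via a generic eigendecomposition is $\mathcal{O}((n_\text{t}N_\text{RFT})^3)$. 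So while your final expression matches the theorem, your derivation of the $K_\text{in}\,n_\text{t}^3 N_\text{RFT}^3$ term does not stand: as written, your per-inner-step count would only justify $K_\text{in}\,n_\text{t}^2 N_\text{RFT}^2$. Reassigning the cubic cost to the eigenvalue computation fixes this.
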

\begin{proof}
See Appendix  \ref{appendix_A}
\end{proof}

 For the hybrid beamformers design at source and destination, we can also utilize MM methods and follow  similar procedures in Algorithm 2, and details are omitted for space limitation.   
Above proposed HBF algorithms are iterative algorithms and suffer from high computational complexity as the number of antennas increases. 
Further,   the  proposed HBF algorithms  and existing optimization based algorithms are linear mapping from the channel matrix and the hybrid beamformers which require  a  real-time computation and are not robust to noisy channel input data. Driven by following  advantages of ML\cite{elbir2019cnn}: (1) low complexity when solving optimization-based problem and (2) capable to extrapolate new features form noisy and limited training data,  we will  propose two learning based approaches to address these problems in the following section.



\section{learning based FD mmwave  beamforming design} 
In this section, we will present our  learning frameworks for HBF design. Firstly, we present the framework of ELM  to design hybrid beamformers and the training data generation  approach for robust HBF design.  Then, we briefly introduce the HBF design  based on CNN.  

\subsection{FD mmwave beamforming design with  ELM}
Feedforward Neural Network (FNN) is a powerful tool for regression and classification \cite{ITP98,17,yy}. As a special FNN, the single layer feedforward network (SLFN) has also been investigated for low complexity \cite{18}. In the SLFN, the weights of input nodes are optimized through the training procedure. Moreover, ELM is developed in \cite{20,21,22}, which consists of only one hidden layer. The weights of input nodes and bias for the hidden nodes are generated randomly. It is shown that ELM can achieve fast running speed with acceptable generalization performance. Since the processing time is one of the bottleneck for low latency communication, using ELM for HBF design can significantly reduce the overall delay. Moreover, due to the hardware constraint (e.g., limited computational capability and memory resources) of mobile terminals, it is easy to implement ELM based component because of its simple architecture.


Thus, in what follows, we will utilize ELM to extract the  features of FD mmWave channels and predict the hybrid beamformers for all nodes. As mentioned in Sec. \ref{FD_mmWave_beamforming_design}, we will design different ELM networks for different nodes. We   mainly focus on the ELM network design for HBF of relay, since the ELM network designs for source and destination  are straightforward following a similar approach to that of relay node.

 We assume that the training dataset is $\mathcal{D}= \left\{(\bm{x}_j,\bm{t}_j)|j=1, \ldots,N \right\} $, where $\bm{x}_j$ and $\bm{t}_j$ are sample and target for the $j$-th training data. Specifically, considering the $j$-th training data, we have the  input as 
  $\bm{x}_j=[\text{Re}(\text{vec}(\overline {\bf{H}}^{(j)}_{\text{RD}})), \text{Im}(\text{vec}(\overline{\bf{H}}^{(j)}_{\text{RD}})),\text{Re}(\text{vec}(\overline{\bf{H}}^{(j)}_{\text{SR}})),$ $\text{Im}(\text{vec}(\overline{\bf{H}}^{(j)}_{\text{SR}})),\text{Re}(\text{vec}(\overline{\bf{H}}^{(j)}_{\text{SI}})),  \text{Im}(\text{vec}(\overline{\bf{H}}^{(j)}_{\text{SI}}))]^T \in\mathbb{R}^{N_\text{I}} $ 
   with dimension ${N_\text{I}}= 2(n_\text{r}(N_\text{t}+N_\text{t} )+N_\text{r}n_\text{t} )$, where $\overline{\bf{H}}^{(j)}_\Omega \in \mathcal{CN}({\bf{H}}_{\Omega},\Gamma_\Omega)$,   $\Omega \in \{\text{SR},\text{RD},\text{SI}   \}$ is the index set for different links.  And $\Gamma_\Omega $ denotes the variance  of additive white Gaussian noise (AWGN),  with its $(m,n)$-th entry as $[\Gamma_\Omega ]_{m,n} = |[\overline{\bf{H}}^{(j)}_\Omega ]_{m,n}|^2 -\text{SNR}_{\text{Train}} $ (dB), where $\text{SNR}_{\text{Train}}$ is the SNR for the training data\cite{elbir2019cnn}.    
    The target of $j$-th data is  $\bm{t}_j=[\text{Re}(\text{vec}({\bf{F}}^{(j)}_{\text{BBT}})),\text{Im}(\text{vec}({\bf{F}}^{(j)}_{\text{BBT}})),\text{Re}(\text{vec}({\bf{F}}^{(j)}_{\text{BBR}})),$ $\text{Im}(\text{vec}({\bf{F}}^{(j)}_{\text{BBR}})),\arg(\text{vec}({\bf{F}}^{(j)}_{\text{RFT}})),\arg(\text{vec}({\bf{F}}^{(j)}_{\text{RFR}}))]  \in\mathbb{R}^{N_\text{o}}$ with dimension $ N_\text{o}= n_\text{t} N_\text{RFT}+n_\text{r} N_\text{RFR}+ 2N_\text{s}(N_\text{RFR}+N_\text{RFT})$, which is from the corresponding beamformers obtained by Algorithms 1 and 2 with input $\overline{\bf{H}}^{(j)}_\Omega$. The ELM with $L$ hidden nodes and activation function $g(x)$ is shown in Fig. \ref{fig_ELM}. The blocks in input layer and output layer consist of neurons which have the number of the dimensions of corresponding input and output, respectively. According to  \cite{22}, the output of ELM related to sample $\bm{x}_j$ can be mathematically modeled as
\begin{equation}\label{eq18}
\sum_{i=1}^{L}\beta_ig_i(\bm{x}_j)=\sum_{i=1}^{L}\beta_ig(\bm{w}_i^T\bm{x}_j+b_i)={\bf g}(\bm{x}_j)\bm{\beta},
\end{equation} 

\begin{figure}[t] 
	\vskip 0.2in
	\begin{center}
		\centerline{\includegraphics[width=85mm]{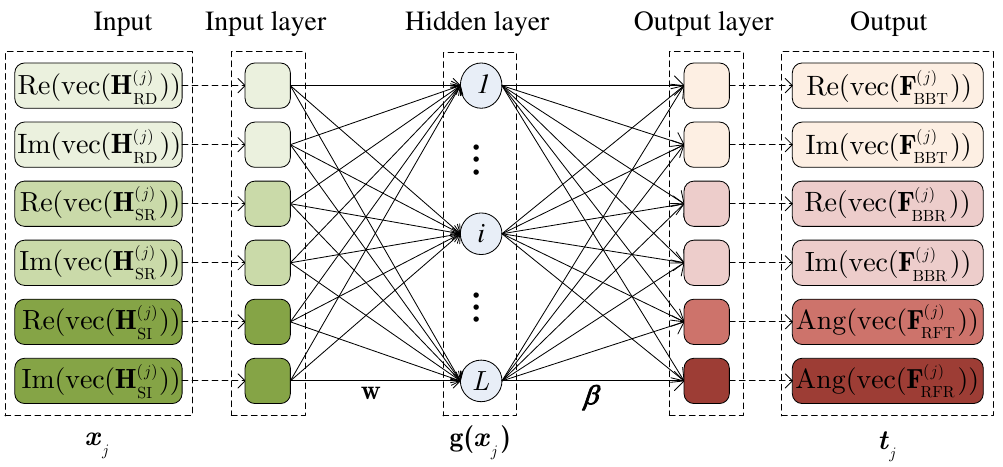}}
		\caption{ELM network for HBF design at relay.}
		\label{fig_ELM}
	\end{center}
	\vskip -0.2in
\end{figure}
where $\bm{w}_i=[w_{i,1},\ldots,w_{i, N_\text{I}}]^T$ is the weight vector connecting the $i$-th hidden node and the input nodes, $\bm{\beta}=\left[\beta_1,\ldots, \beta_L\right]^T \in \mathbb{R}^{L\times N_\text{o}}$, and $\beta_i=[\beta_{i,1},\ldots,\beta_{i,N_\text{o}}]^T$ is the weight vector connecting the $i$-th hidden node and the output nodes, and $b_i$ is the bias of the $i$-th hidden node. Considering all the samples in $\mathcal{D}$, we stack (\ref{eq18}) to obtain the hidden-layer output as

\begin{equation}
{\bf G}=\left[\begin{matrix}
{\bf g}(\bm{x}_1)\\\vdots\\ {\bf g}(\bm{x}_N)
\end{matrix} \right]=\left[\begin{matrix}
g_1(\bm{x}_1) & \cdots &g_L(\bm{x}_1) \\
\vdots & \cdots & \vdots \\
g_1(\bm{x}_N) & \cdots &g_L(\bm{x}_N)
\end{matrix}\right]_{L\times N}.
\end{equation}
Actually, we can regard ${\bf G}$ as the feature mapping from the training data, which maps the data from the $N_\text{I}$-dimensional space into the $L$-dimensional hidden-layer feature space.

 \begin{figure}[t] 
	\vskip 0.2in
	\begin{center}
		\centerline{\includegraphics[width=88mm]{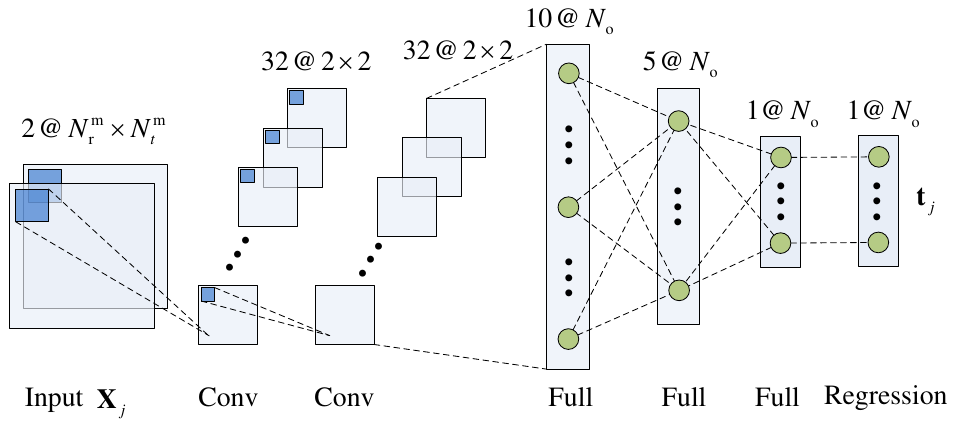}}
		\caption{CNN network for HBF design at relay.}
		\label{fig_CNN}
	\end{center}
	\vskip -0.2in
\end{figure}

\begin{figure*}[ht]
	\centering
	\subfloat[ ]{\includegraphics[width =.32\textwidth]{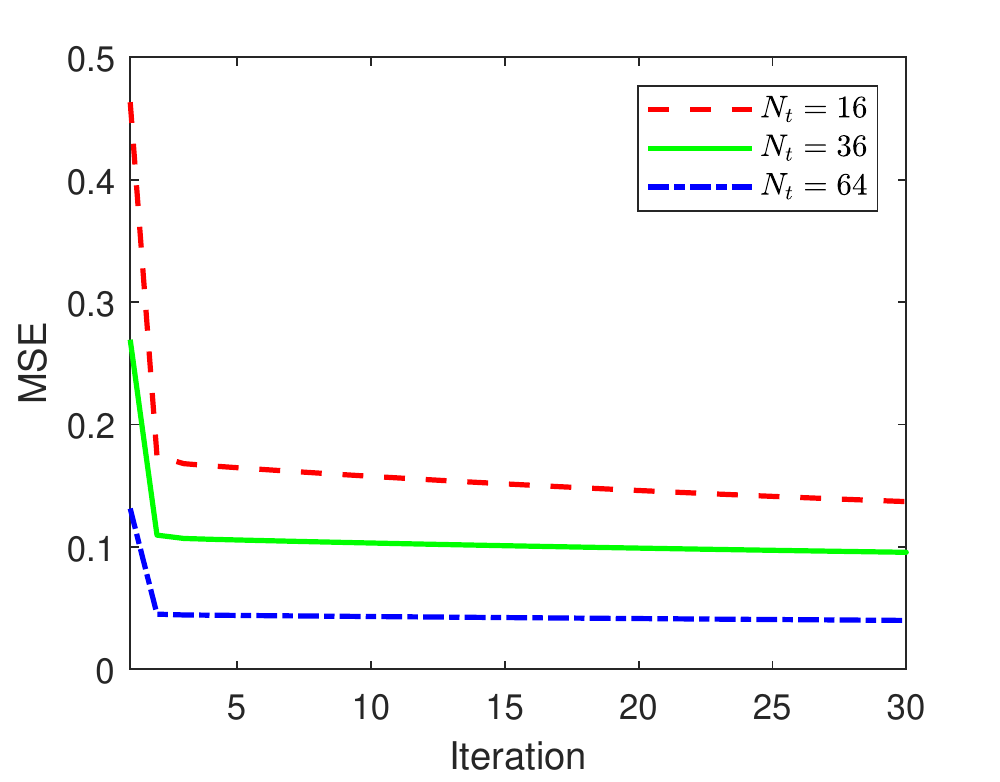}\label{fig:4a}}
	\subfloat[ ]{\includegraphics[width=.32\textwidth]{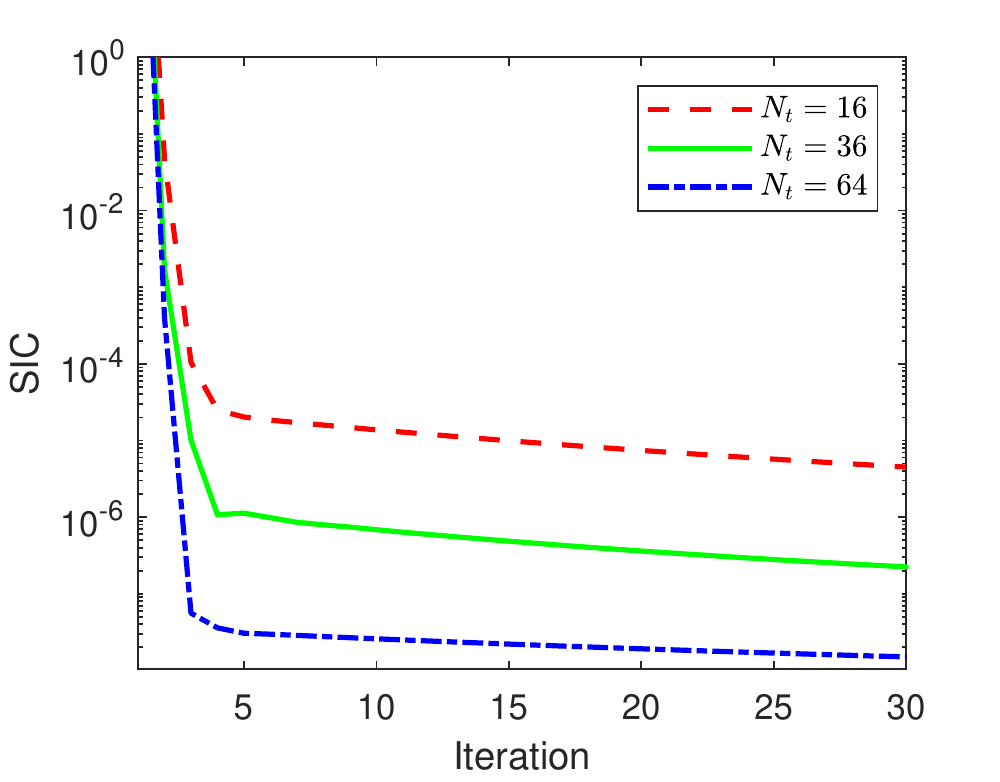}\label{fig:4b}}
		\subfloat[]{\includegraphics[width=.32\textwidth]{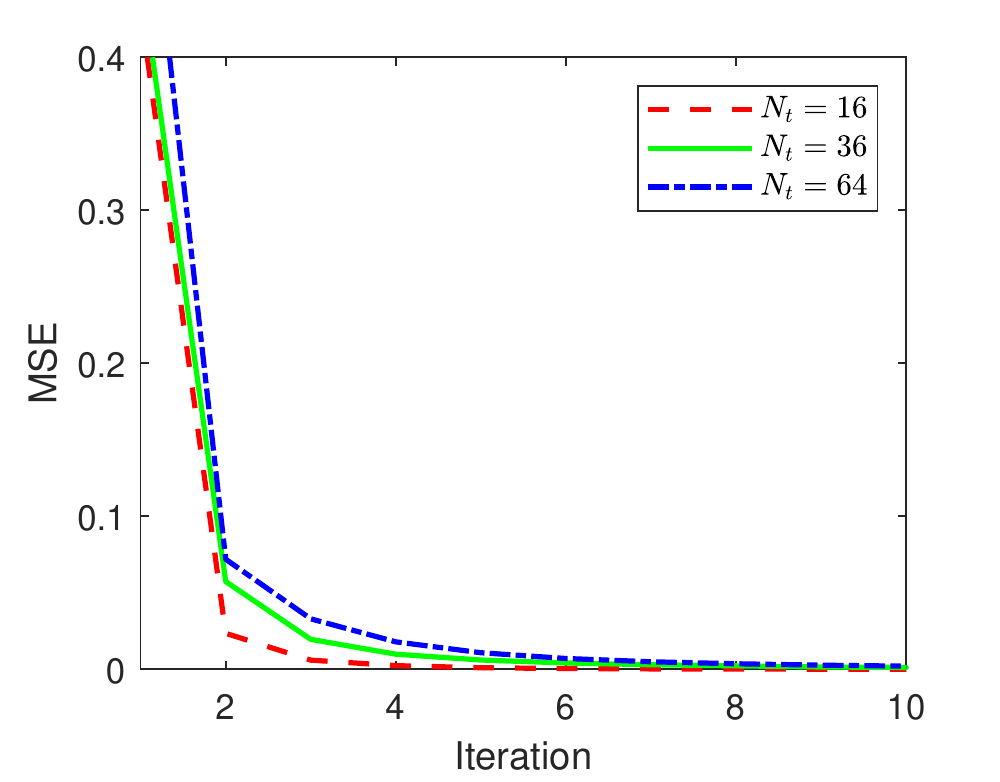}\label{fig:4c}}
	\caption{MSE and SIC performance  for the proposed beamforming algorithms vs the number of iterations and  different numbers of antennas, $N_t=N_r$, $N_s=2$ and  $N_\text{RFR}=N_\text{RFT}=4$: (a) MSE for ADMM-based  algorithms; (b) SIC performance for  ADMM-based  algorithms; (c)  MSE for MM-based  algorithms.}
	\label{fig:4}
\end{figure*}

Since there is only one hidden layer in ELM, with randomized weights $\{\bm{w}_i\}$ and biases $\{b_i\}$, the goal is to tune the output weight $\bm{\beta}$ with training data $\mathcal{D}$ through minimizing the ridge regression problem 
\begin{equation} 
(\text{P}7):~ \bm{\beta}^*=\arg\min_{\bm{\beta}}~\frac{\lambda}{2}\left\| {\bf G}\bm{\beta}-{\bf T}\right\|^2 + \frac{1}{2} \left\| \bm{\beta} \right\|^2,
\end{equation}
where $ {\bf T}=\left[\bm{t}_{1},\ldots, \bm{t}_N\right]^T_{N\times N_\text{o}}$ is the concatenated target, $\lambda$ is the trade-off parameter between the training error and the regularization. According to  \cite{21}, the closed-form solution for (P7) is 
\begin{equation}\label{beta1}
\bm{\beta}^*={\bf G}^T\left(\frac{\bf{I}}{\lambda}+{\bf G}{\bf G}^T \right)^{-1}{\bf T},~~N\leq L,
\end{equation}
or
\begin{equation}\label{beta2}
\bm{\beta}^*=\left(\frac{\bf{I}}{\lambda}+{\bf G}^T {\bf G}\right)^{-1}{\bf G}^T{\bf T},~~N>L, 
\end{equation}
where $\bm{\beta}^*$ in \eqref{beta1} is derived for the case where the number of training samples is small, while  $\bm{\beta}^*$ in \eqref{beta2} is derived for the case where the number of training samples is huge. 
From above, we can see that ELM is with very low complexity since there is only one layer's parameters to be trained and the weight of output layer (i.e., $\bm{\beta}$ ) is given in closed-form. 


\subsection{FD mmwave beamforming design with  CNN} 
Due to the  advantages of data compression, CNN is another promising learning network to solve communication problems at the physical layer. Some latest CNN based  HBF designs are presented in \cite{elbir2019cnn,bao2020deep}, but these works are only for single-hop wireless communications. Based on the CNN-based hybrid beamforming  model in \cite{elbir2019cnn,bao2020deep}, we extend it to FD mmWave systems. The CNN-based architecture is shown in Fig.~\ref{fig_CNN}, which has a total eleven layers, including an input layer, two convolution layers, three fully connected layers, a regression output layer and  four activation layers after each  convolutional layer and fully connected layer. Detailed parameters in each layer are shown in Fig.~\ref{fig_CNN}. 
Different from ELM, the $j$-th input data ${\bf{X}}_j$ of CNN is a  three-dimensional (3D) real matrix  with  size $N_\text{r}^\text{m} \times N_\text{t}^\text{m} \times2 $ 
where $N_\text{t}^\text{m} = \max(N_t, n_t) $ and $N_\text{r}^\text{m} = \max(N_r, n_r) $. We define the first channel of the input as the element-wise real value of the input channel matrix given by $ [ {\bf{X}}_j]_{:,:,1} =[\text{Re}(\overline {\bf{H}}^{(j)}_{\text{SR}}),
 \text{Re}(\overline {\bf{H}}^{(j)}_{\text{RD}}), \text{Re}(\overline {\bf{H}}^{(j)}_{\text{SI}}) ]$, and  the second channel of the input as the element-wise imaginary value of the input channel matrix given by $ [ {\bf{X}}_j]_{:,:,2} =[\text{Im}(\overline {\bf{H}}^{(j)}_{\text{SR}}), \text{Im}(\overline {\bf{H}}^{(j)}_{\text{RD}}), \text{Im}(\overline {\bf{H}}^{(j)}_{\text{SI}}) ]$. The output of the  CNN is the same as that of ELM, which can be obtained from   Algorithm \ref{ALG2}. More details of CNN  can be found in \cite{elbir2019cnn}.

\section{Numerical simulations} 
In this section, we will numerically evaluate  the performance of the proposed MO  and ADMM based HBF algorithm (MM-ADMM-HBF),  ELM-based HBF method (ELM-HBF) and CNN-based HBF method (CNN-HBF). We compared our results with four benchmark algorithms:  SI-free fully  digital  beamforming (Full-D), fully digital beamforming with SI (Full-D with SI), HD fully digital beamforming (HD Full-D) and OMP-based HBF method (OMP-HBF) \cite{zhang2019precoding}. 
The channel parameters are set to $N_\text{c}= 5$, $N_\text{p}= 10$, $d=\frac{\lambda}{2}$ and $\alpha_{k,l} \sim \mathcal{CN}(0,1)$\cite{zhang2019precoding}.    The bandwidth  of this system is $2$~GHz with central carrier frequency $f_c=28$~GHz. According to \cite{zhang2019precoding},  the pathloss is  $P_\text{loss}= 61.5+20\log(r)+\varepsilon $ (dB) where $\varepsilon \sim N(0,5.8) $ and $r$ denotes the distance between  transmitter and receiver. We assume that the distance between source and relay, and  the distance between relay and destination are $r_\text{sr}=100$~m and  $r_\text{rd}=100$~m, respectively. We assume that all nodes in FD mmWave systems have the same hardware constraints, and $N_\text{t}=n_\text{t}$, $N_\text{r}=n_\text{r}$, $N_\text{RFR}=N_\text{RFD}$, $N_\text{s}=n_\text{s}$ and  $N_\text{RFT}=N_\text{RFS}$.  In both training  and testing stages,  each channel realization is added by AWGN with different powers of $\text{SNR}_{\text{Train}}=\text{SNR}_{\text{Test}} \in \{15, 20, 25\}$ dB.

\subsection{Performance of ADMM and MM based beamforming}\label{Num_alg_per}

 Fig.~\ref{fig:4} summarizes the performance of proposed beamforming algorithms versus the number of iterations and  different numbers of antennas.
 Fig.~\subref*{fig:4a} shows the MSE performance (i.e., $\left\| {\bf F}_\text{opt} -    {\bf F}_\text{T} {\bf F}_\text{R}^{H} \right\|_\text{F}^2$) of the ADMM based beamforming  algorithm (Algorithm 1).  We can observe a fast convergence of the proposed algorithm and the convergence rate decreases as the number of antennas increases. Results also  show that  the  MSE of Algorithm 1 at convergence   decreases  as the number of antennas increases. 
 Fig.~\subref*{fig:4b} shows the   SIC performance (i.e., $ {\|{\bf F}_{\rm R}^{H} {\bf H}_{\rm SI} {\bf F}_{\rm T} \|}_{\rm{F}}^2$)  of Algorithm 1.
It is shown  that the power of the SI decreases with increasing numbers of algorithmic iteration. We can also see that using a large number of antennas can eliminate SI faster.
  Fig.~\subref*{fig:4c}  shows the  MSE   performance (i.e., $\left\| \hat{\bf F}_\text{opt} -  {\bf F}_\text{RFT} {\bf F}_\text{BBT} {\bf F}_\text{BBR}^{H} {\bf F}_\text{RFR}^{H} \right\|_\text{F}^2$) of the MM-based HBF algorithm (Algorithm 2). It is shown that a very fast convergence rate of Algorithm 2, even for the case with  a large number of antennas (e.g., $N_t=64$). 

\begin{figure}[t] 
	\vskip 0.2in
	\begin{center}
		\centerline{\includegraphics[width=82mm]{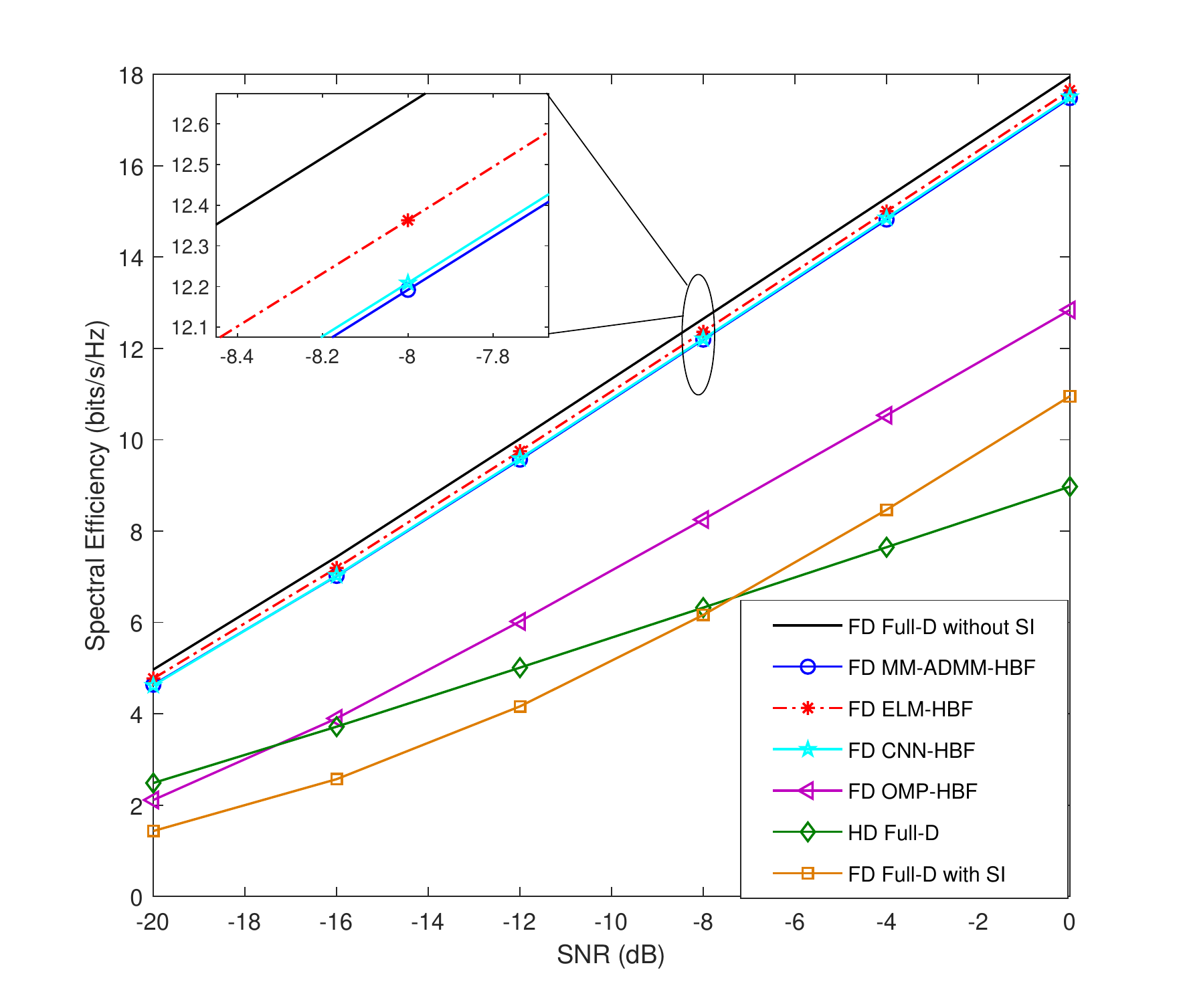}}
		\caption{Spectral efficiency of various HBF algorithms vs SNR with $N_t=N_r=36$,  $N_\text{RFR}=6$, $N_\text{RFT}=8$ and $N_\text{s}=2$.}
		\label{fig_SE_all_1a}
	\end{center}
	\vskip -0.2in
\end{figure}
 \begin{figure}[t] 
	\vskip 0.2in
	\begin{center}
		\centerline{\includegraphics[width=82mm]{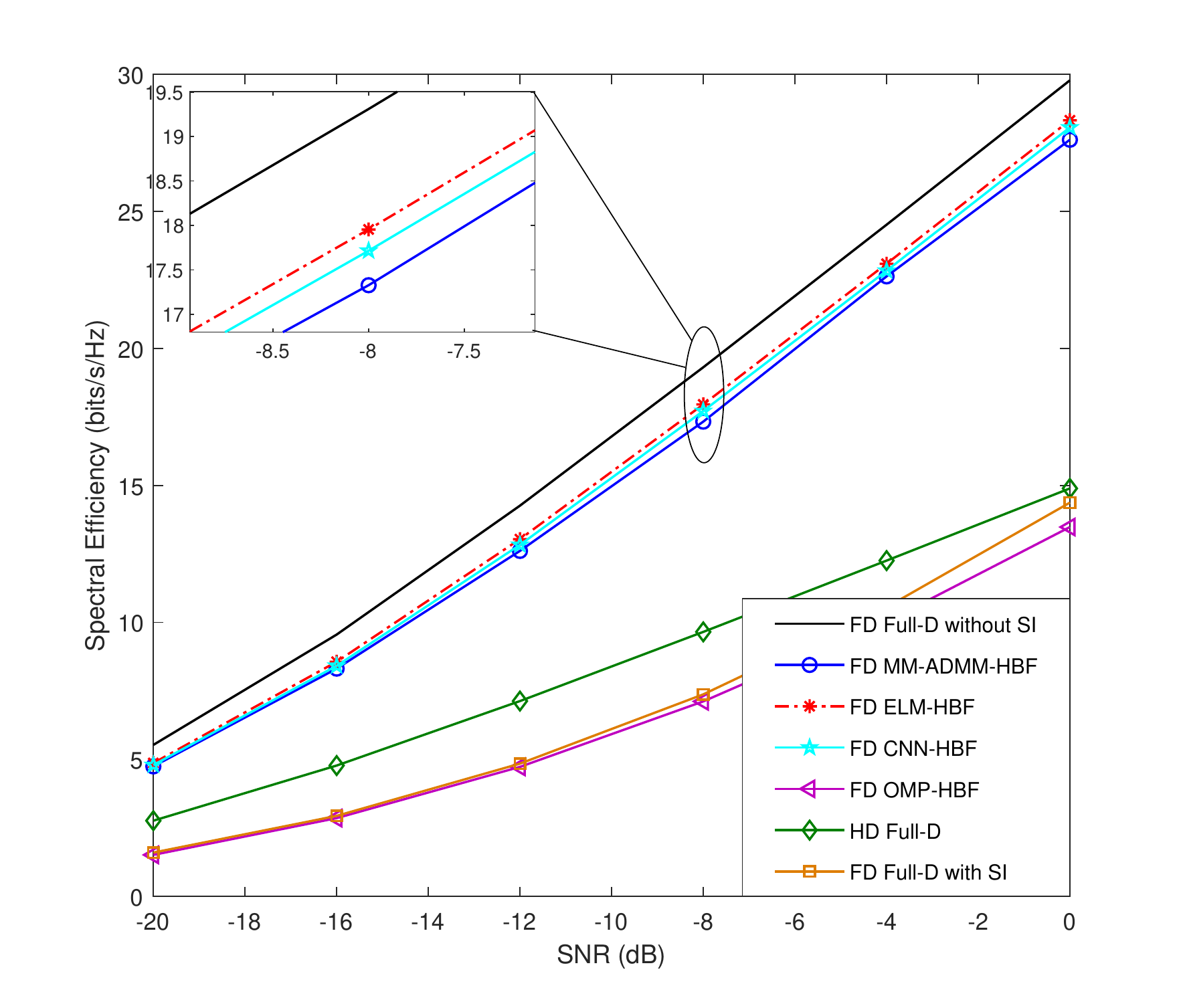}}
		\caption{Spectral efficiency of various HBF algorithms vs SNR with $N_t=N_r=36$,  $N_\text{RFR}=6$, $N_\text{RFT}=8$ and $N_\text{s}=4$.}
		\label{fig_SE_all_1b}
	\end{center}
	\vskip -0.2in
\end{figure}

\subsection{Hybrid beamforming performance}

Fig.~\ref{fig_SE_all_1a} and Fig.~\ref{fig_SE_all_1b}  show the spectral efficiency of the proposed HBF methods and  existing methods versus SNR and different numbers of transmitting  streams.  From Fig.~\ref{fig_SE_all_1a},  we can see that the proposed MM-ADMM based HBF algorithm can approximately achieve the performance of fully digital beamforming without SI, which means that  the proposed algorithm can achieve near-optimal HBF and guarantee efficient SIC. We can also see that the proposed algorithm significantly outperforms OMP based algorithm. The reason is that the OMP-based algorithm in \cite{zhang2019precoding} eliminates SI by adjusting the derived  optimal beamformers, which will significantly degrade the spectral efficiency. Furthermore, the  proposed CNN-based and ELM-based HBF methods outperform  other methods. The performance of learning based methods is attributed to extracting the features of noisy input data (i.e., imperfect channels for different links) and be robust to the imperfect channels. We can see that all proposed methods can approximately achieve twice the spectral efficiency to the HD system. 
Fig.~\ref{fig_SE_all_1b} shows that our proposed methods  can also achieve high spectral efficiency with increasing number of transmitting  streams.  However, the spectral efficiency of OMP-based algorithm becomes even lower than  the FD fully-digital beamforming with SI. The reason is that OMP-based algorithm  can  eliminate SI only when $N_\text{RFT}\ge N_\text{RFR}+N_\text{s}$. Comparing the result in  Fig.~\ref{fig_SE_all_1a} to that in  Fig.~\ref{fig_SE_all_1b}, we can find that the spectral efficiency is significantly increased as the number of transmitting  streams increases.

Fig.~\ref{fig_Nt} shows two groups of  spectral efficiency with different numbers of antennas. In each group, simulation results of three proposed methods together with  OMP-HBF and Full-D beamforming methods are presented. It is shown  that the spectral efficiency increases as the number of antennas increases, and the gap of the results within a group decreases simultaneously. Moreover,  the proposed methods can achieve higher spectral efficiency than the OMP-HBF method. We can also see that the  proposed methods can  approximately achieve the performance of Full-D beamforming without SI when $N_\textbf{t}=N_\textbf{r}=64$. Finally, the proposed learning based HBF methods (i.e., ELM-HBF and CNN-HBF)  outperform the optimization-based HBF methods (i.e., OMP-HBF and MM-ADMM-HBF). 
 \begin{figure}[t] 
	\vskip 0.2in
	\begin{center}
		\centerline{\includegraphics[width=82mm]{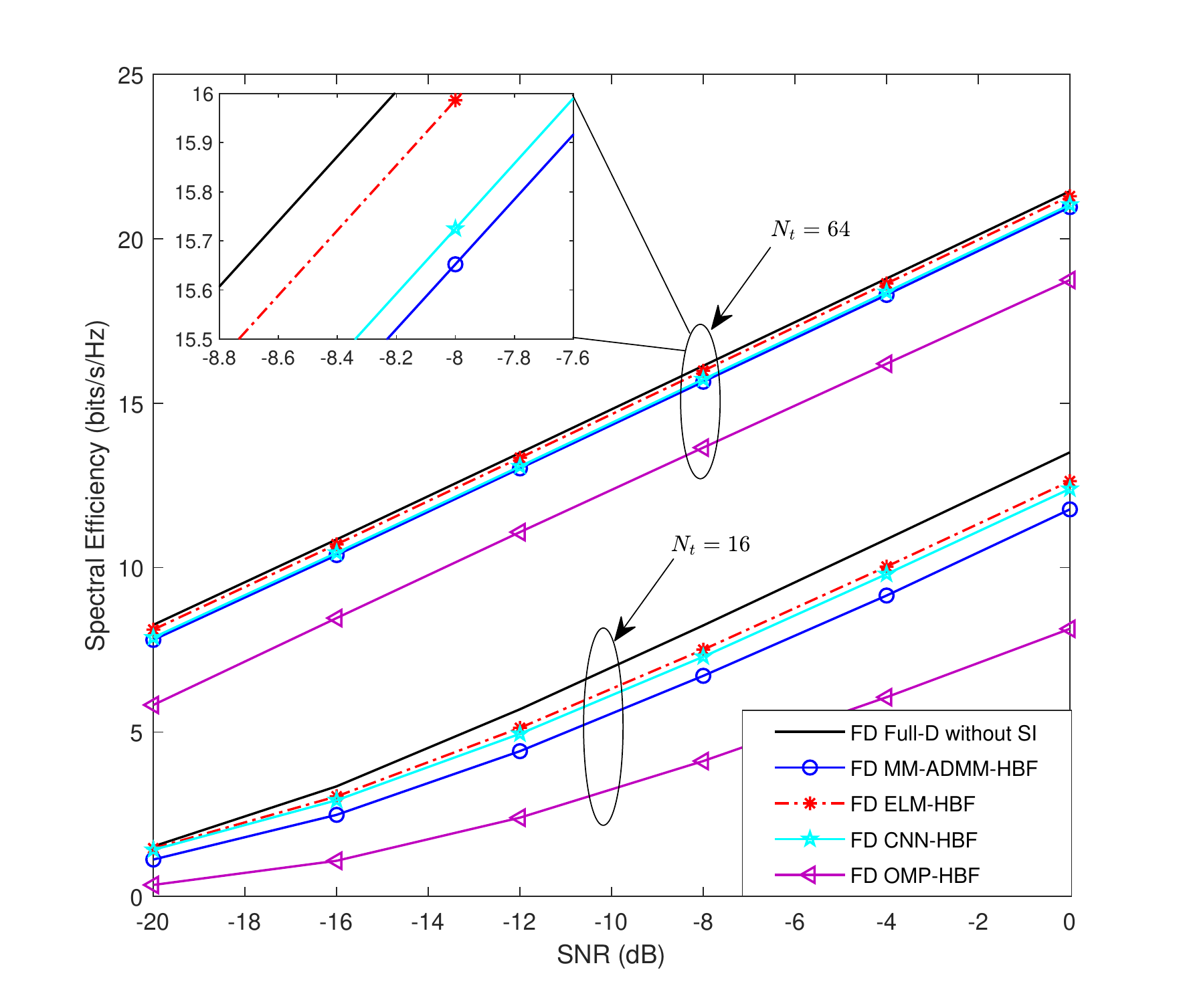}}
		\caption{Spectral efficiency of various HBF algorithms vs SNR and different numbers of antennas with $N_t=N_r$,  $N_\text{RFR}=4$, $N_\text{RFT}=6$.}
		\label{fig_Nt}
	\end{center}
	\vskip -0.2in
\end{figure}

 \begin{figure}[t] 
	\vskip 0.2in
	\begin{center}
		\centerline{\includegraphics[width=82mm]{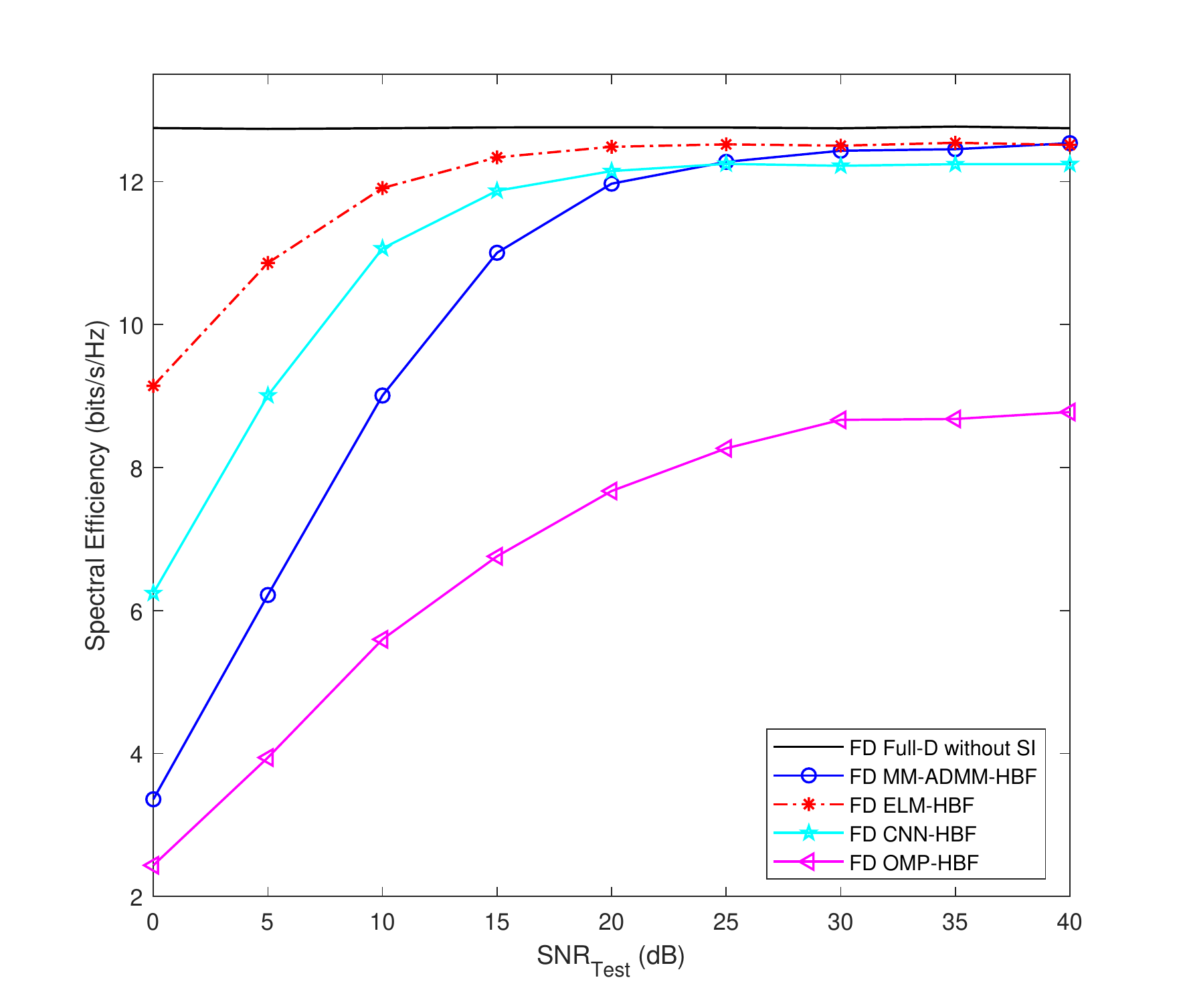}}
		\caption{Spectral efficiency of various HBF algorithms vs  $\text{SNR}_{\text{Test}}$  with $N_t=N_r=36$,  $N_\text{RFR}=4$, $N_\text{RFT}=6$ and $\text{SNR}=-8$ dB.}
		\label{fig_robust}
	\end{center}
	\vskip -0.2in
\end{figure}

\begin{table*}[t]
\caption{Spectral Efficiency (bits/s/Hz), Training and Prediction Time Comparison}\label{complexity_SE}
\scalebox{0.64}{
\begin{tabular}{|c|c|c|c|c|c|c|c|c|c|c|c|c|c|c|c|c|}
\hline
\multirow{3}{*}{Nt} & \multicolumn{2}{c|}{\multirow{2}{*}{MM-ADMM}}                                & \multicolumn{2}{c|}{\multirow{2}{*}{OMP}}                               & \multicolumn{3}{c|}{\multirow{2}{*}{CNN}}                                                                                             & \multicolumn{9}{c|}{ELM}                                                                                                                    \\ \cline{9-17} 
                    & \multicolumn{2}{c|}{}                                                   & \multicolumn{2}{c|}{}                                                   & \multicolumn{3}{c|}{}                                                                                                                 & \multicolumn{3}{c|}{Sigmoid Node}                                                                                                     & \multicolumn{3}{c|}{Multiquadrics RBF Node}                                                                                                    & \multicolumn{3}{c|}{PReLU Node}                                                                                                        \\ \cline{2-17} 
                    & \begin{tabular}[c]{@{}c@{}}Prediction\\ Time (s)\end{tabular} & SE      & \begin{tabular}[c]{@{}c@{}}Prediction\\ Time (s)\end{tabular} & SE      & \begin{tabular}[c]{@{}c@{}}Prediction\\ Time (s)\end{tabular} & \begin{tabular}[c]{@{}c@{}}Training\\ Time (s)\end{tabular} & SE      & \begin{tabular}[c]{@{}c@{}}Prediction\\ Time (s)\end{tabular} & \begin{tabular}[c]{@{}c@{}}Training\\ Time (s)\end{tabular} & SE      & \begin{tabular}[c]{@{}c@{}}Prediction\\ Time (s)\end{tabular} & \begin{tabular}[c]{@{}c@{}}Training\\ Time (s)\end{tabular} & SE               & \begin{tabular}[c]{@{}c@{}}Prediction\\ Time (s)\end{tabular} & \begin{tabular}[c]{@{}c@{}}Training\\ Time (s)\end{tabular} & SE      \\ \hline
16                  & 0.1920                                                        & 6.6393  & 0.0291                                                        & 4.1733  & 0.0063                                                        & 315.51                                                      & 6.8146  & 0.0626                                                        & 66.98                                                       & 7.1410  & 0.0249                                                        & 59.59                                                       & \textbf{7.1491}  & \textbf{0.0056}                                               & \textbf{1.5791}                                             & 7.1291  \\ \hline
36                  &  0.6112                                                       & 11.8715 & 0.3103                                                        & 8.1470  & 0.0134                                                        & 963.58                                                      & 11.8818 & 0.1859                                                        & 275.23                                                      & 12.2395 & 0.1338                                                        & 213.99                                                      & \textbf{12.3890} & \textbf{0.0126}                                               & \textbf{2.9624}                                             & 12.3569 \\ \hline
64                  &  1.7422                                                       & 15.5430 & 1.2210                                                        & 11.5962 & \textbf{0.0349 }                                                       & 3714.3                                                      & 15.7585 & 0.6681                                                        & 1225.6                                                      & 15.8468 & 0.5830                                                        & 1146.1                                                      & \textbf{15.9355} &  0.0530                                                & \textbf{6.1633}                                             & 15.8555 \\ \hline
\end{tabular} }
\end{table*}

In order to evaluate the performance of algorithms on the robustness, we  present the  spectral efficiency of various HBF algorithms versus different noise levels (i.e., $\text{SNR}_{\text{Test}}$) in Fig.~\ref{fig_robust}. Note that SI-free fully  digital  beamforming (Full-D)  is fed with perfect CSI which can achieve the best performance. From  Fig.~\ref{fig_robust}, we can see that the performance of all methods increase with increasing $\text{SNR}_{\text{Test}}$. 
We can also see that both ELM-HBF and CNN-HBF are more robust against the corruption in the  channel data compared to other methods. The reason is that proposed learning based methods estimate the beamformers by extracting the features of noisy  input  data, while  MM-ADMM and OMP methods require optimal digital  beamformers which are derived from noisy channels. Furthermore, it is shown that ELM-HBF outperforms CNN-HBF. The reason is that  the optimal weight matrix of  ELM network can be  derived in a close-form,  while the multi-layer parameters of CNN are  hard to be optimized. Finally, we can see that the ELM-HBF can approximately achieve optimal as $\text{SNR}_{\text{Test}}$ increases.

\subsection{Computational Complexity}
In this part, we measure the computation time of our proposed HBF approaches and compared them with OMP-HBF.  The computation time of a learning machine includes  offline training time and online prediction time. 
Since the learning network performance and  training time are relative to the activation function in the hidden node, we make a performance comparison among following three common  activation functions for ELM:  

(1) Sigmoid  function 
\begin{equation}
g(\bm{w},\bm{x},b)=\frac{1}{1+\exp (-\bm{w}^T\bm{x}-b )};
\end{equation}

(2) Multi-quadratic radial basis function (RBF) 
\begin{equation}
g(\bm{w},\bm{x},b)=\sqrt{\left\| \bm{x}- \bm{w}\right\|^2 +b^2};
\end{equation}

(3) Parametric rectified linear unit (PReLU) function
\begin{equation}
g(\bm{w},\bm{x},a)=\max (0,\bm{w}^T\bm{x} )+ a \min (0,\bm{w}^T\bm{x} ).
\end{equation}
For CNN,  the multi-layer structure  will lead to high computational complexity and  
a simple rectified linear unit (ReLU) activation function (i,e, $g(\bm{w},\bm{x})=\max(0,\bm{w}^T\bm{x})$) is commonly used to reduce the training complexity. Results in  \cite{bao2020deep,elbir2019joint,elbir2019cnn} show that CNN with ReLU can  achieve good classification performance. Thus,  we  consider ReLU activation function for CNN.  We select $N_\text{s}=2$, $N_\text{RFR}=4$,  $N_\text{RFT}=6$ and  $\text{SNR}=-8$ dB. $1000$ channel samples for $10$ channel realizations are fed into the learning machines, and $100$ channel samples are used for testing. 
 For different approaches, we summarize the spectral efficiency (SE), training time and prediction time in Table \ref{complexity_SE}.

We can see that the proposed ELM-HBF and CNN-HBF methods can achieve higher spectral efficiency and less prediction time than the optimization-based methods (i.e., MM-ADMM and OMP). In addition, we can observe that the prediction time increases with   the number of antennas. It is shown that CNN and    ELM with PReLU can achieve  very low  prediction time (e.g.,  less than $0.06$ s for the case with $N_\text{t}=64$).  
Although ELM with multi-quadric RBF can achieve a slightly higher spectral efficiency than that with PReLU, it requires almost ten times the prediction time and  a hundred times the training time  compared to that with PReLU. For instance, the training time of multi-quadric RBF is about $1146.1$ s while it is about $6.1633$ s of PReLU for the case with $N_\textbf{t}=64$. Results show that   CNN always spends longer training time  and achieves lower spectral efficiency than ELM. For instance, CNN takes about 600 times the training time compared to ELM with PReLU for the case with $N_\textbf{t}=64$.

\section{Conclusions}
We proposed two learning schemes for HBF design of FD mmWave systems, i.e., ELM-HBF and CNN-HBF. The learning machines use  noisy channels of different nodes as inputs and output the hybrid beamformers. To provide accurate  labels of input channel data, we first proposed an ADMM based algorithm to achieve SIC beamforming, and then proposed an MM based algorithm for joint transmitting and receiving HBF optimization. The  convergence and  complexity for both algorithms were analyzed. The effectiveness of the proposed methods was evaluated through several experiments.  Results illustrate that both ADMM  and MM based   algorithms can converge  and the SI can be effectively suppressed. Results also show that both proposed ELM-HBF and CNN-HBF methods can   achieve higher spectral efficiency and much lower prediction time than the convectional optimization-based methods. Furthermore,  the proposed learning based methods can achieve more robust HBF performance than conventional methods. In addition,  ELM-HBF with PReLU activation function can achieve much lower training time than that with Sigmoid or RBF activation function.  Since ELM-HBF can achieve much lower computation time and more robust HBF performance than CNN-HBF, it might be more efficient to use  ELM-HBF  for practical implementation. 

\begin{appendices}
   \section{ Proof of Theorem \ref{theorem_alg2}}\label{appendix_A}
 
   To prove the  convergence  of Algorithm 2, we first analyze the convergence of the MM algorithm when calculating ${\bf F}_\text{RFT}^{(k_\text{o}+1)}$ in step 6.  According to the majorizer of  $J({\bf F}_\text{RFT}; {\bf Y}_\text{T} )$ in \eqref{J_T_major}, we have the following four properties, 
   \begin{subequations}
   	\begin{align}	
   	 J\left({\bf F}_\text{RFT}^{\left(k_\text{i}\right)}; {\bf Y}_\text{T}^{(k_\text{o})}  \right)& = \bar J\left({\bf F}_\text{RFT}^{(k_\text{i})}; {\bf Y}_\text{T}^{(k_\text{o})},{\bf F}_\text{RFT}^{(k_\text{i})} \right), \label{Property1}\\
   	 \nabla_{{\bf F}_\text{RFT}} J\left({\bf F}_\text{RFT}; {\bf Y}_\text{T}^{(k_\text{o})}  \right) &=\nabla_{{\bf F}_\text{RFT}} \bar J\left({\bf F}_\text{RFT}; {\bf Y}_\text{T}^{(k_\text{o})},{\bf F}_\text{RFT}^{(k_\text{i})} \right), \label{Property2}\\
   	 J\left({\bf F}_\text{RFT}^{(k_\text{i}+1)}; {\bf Y}_\text{T}^{(k_\text{o})}  \right)& \mathop  \le \limits^{(a)}  \bar J\left({\bf F}_\text{RFT}^{(k_\text{i}+1)}; {\bf Y}_\text{T}^{(k_\text{o})},{\bf F}_\text{RFT}^{(k_\text{i})} \right), \label{Property3}\\
   	 \bar J\left({\bf F}_\text{RFT}^{(k_\text{i}+1)}; {\bf Y}_\text{T}^{(k_\text{o})},{\bf F}_\text{RFT}^{(k_\text{i})} \right) & \mathop  \le \limits^{(b)}  \bar J\left({\bf F}_\text{RFT}^{(k_\text{i})}; {\bf Y}_\text{T}^{(k_\text{o})},{\bf F}_\text{RFT}^{(k_\text{i})} \right), \label{Property4}
   	\end{align}	
   \end{subequations}
   where  (a) follows from   $\lambda_\text{T}{\bf I} \ge {\bf Q}_\text{T}$, (b) follows from  $ \bar J({\bf F}_\text{RFT}^{(k_\text{i}+1)}; {\bf Y}_\text{T}^{(k_\text{o})},{\bf F}_\text{RFT}^{(k_\text{i})} )= \mathop{\min}\limits_{{\bf F}_\text{RFT}}  \bar J({\bf F}_\text{RFT}; {\bf Y}_\text{T}^{(k_\text{o})},{\bf F}_\text{RFT}^{(k_\text{i})} )$, and $ {\bf Y}_\text{T}^{(k_\text{o})}={\bf F}_\text{BBT}^{(k_\text{o}+1)} {\bf F}_\text{BBR}^{H(k_\text{o})} {\bf F}_\text{RFR}^{H(k_\text{o})}$. Based on properties \eqref{Property1}, \eqref{Property3} and \eqref{Property4}, we obtain  
   \begin{equation}\label{inequality1}
   	\begin{split}	
   J\left({\bf F}_\text{RFT}^{(k_\text{i}+1)}; {\bf Y}_\text{T}^{(k_\text{o})} \right) &\le \bar J\left({\bf F}_\text{RFT}^{(k_\text{i}+1)}; {\bf Y}_\text{T}^{(k_\text{o})},{\bf F}_\text{RFT}^{(k_\text{i})} \right) \\&\le \bar J\left({\bf F}_\text{RFT}^{(k_\text{i})}; {\bf Y}_\text{T}^{(k_\text{o})},{\bf F}_\text{RFT}^{(k_\text{i})} \right)\\&=J\left({\bf F}_\text{RFT}^{(k_\text{i})}; {\bf Y}_\text{T}^{(k_\text{o})} \right).  
   	\end{split}	
   \end{equation}
   Thus, $\{ J({\bf F}_\text{RFT}^{(k_\text{i})}; {\bf Y}_\text{T}^{(k_\text{o})} )\}$ is a non-increasing  sequence and thus it converges since $J({\bf F}_\text{RFT}; {\bf Y}_\text{T} )$ is lower bounded. Further, since  $J({\bf F}_\text{RFT}; {\bf Y}_\text{T}^{(k_\text{o})} )$ and $ \bar J({\bf F}_\text{RFT}; {\bf Y}_\text{T}^{(k_\text{o})},{\bf F}_\text{RFT}^{(k_\text{i})} )$ have the same gradient at point ${\bf F}_\text{RFT}^{(k_\text{i})} \in \mathcal{F}_\text{RFT}$ according to \eqref{Property2}, ${\bf F}_\text{RFT}^{(k_\text{i})}$ can converge to a stationary point solution of original problem (P4). After the converge of step 6, we have $J({\bf F}_\text{RFT}^{(k_\text{o}+1)}; {\bf Y}_\text{T}^{(k_\text{o})} ) \le J({\bf F}_\text{RFT}^{(k_\text{o})}; {\bf Y}_\text{T}^{(k_\text{o})} )$.
   Similarly, the convergence of the MM algorithm when calculating ${\bf F}_\text{RFR}^{(k_\text{o})}$ in step 14  can be proved with the following inequalities 
   \begin{equation}\label{inequality2}
   	\begin{split}	
   J\left({\bf F}_\text{RFR}^{(k_\text{i}+1)}; {\bf Y}_\text{R}^{(k_\text{o}+1)} \right) &\le \tilde J\left({\bf F}_\text{RFR}^{(k_\text{i}+1)}; {\bf Y}_\text{R}^{(k_\text{o}+1)},{\bf F}_\text{RFR}^{(k_\text{i})} \right) \\&\le \tilde J\left({\bf F}_\text{RFR}^{(k_\text{i})}; {\bf Y}_\text{R}^{(k_\text{o}+1)},{\bf F}_\text{RFR}^{(k_\text{i})} \right)\\&=J\left({\bf F}_\text{RFR}^{(k_\text{i})}; {\bf Y}_\text{T}^{(k_\text{o}+1)} \right),  
   	\end{split}	
   \end{equation}
   where  ${\bf Y}_\text{R}^{(k_\text{o}+1)} = {\bf F}_\text{RFT}^{(k_\text{o}+1)} {\bf F}_\text{BBT}^{H(k_\text{o}+1)} {\bf F}_\text{BBR}^{H(k_\text{o}+1)}$.
   After the converge of step 14, we obtain $J({\bf F}_\text{RFR}^{(k_\text{o}+1)}; {\bf Y}_\text{R}^{(k_\text{o}+1)} ) \le J({\bf F}_\text{RFR}^{(k_\text{o})}; {\bf Y}_\text{R}^{(k_\text{o}+1)} )$.  Then, based on  above observations, we have 
   \begin{equation}
   	\begin{split}
   & J\left({\bf F}_\text{RFT}^{(k_\text{o})},{\bf F}_\text{BBT}^{(k_\text{o})}, {\bf F}_\text{BBR}^{(k_\text{o})}, {\bf F}_\text{RFR}^{(k_\text{o})} \right) =J\left({\bf F}_\text{RFR}^{(k_\text{o})};{\bf F}_\text{BBT}^{(k_\text{o})} {\bf F}_\text{BBR}^{H(k_\text{o})} {\bf F}_\text{RFR}^{H(k_\text{o})} \right)	\\
   &\qquad \qquad\qquad \qquad\quad \mathop  \ge \limits^{(a)}  J\left({\bf F}_\text{RFT}^{(k_\text{o})};{\bf F}_\text{BBT}^{(k_\text{o}+1)} {\bf F}_\text{BBR}^{H(k_\text{o})} {\bf F}_\text{RFR}^{H(k_\text{o})} \right)\\
   &\qquad \qquad\qquad\quad  \qquad\mathop  \ge \limits^{(b)}  J\left({\bf F}_\text{RFT}^{(k_\text{o}+1)};{\bf F}_\text{BBT}^{(k_\text{o}+1)} {\bf F}_\text{BBR}^{H(k_\text{o})} {\bf F}_\text{RFR}^{H(k_\text{o})} \right)\\ 
   &\qquad \qquad\qquad\quad  \qquad=  J\left({\bf F}_\text{RFR}^{(k_\text{o})}; {\bf F}_\text{RFT}^{(k_\text{o}+1)} {\bf F}_\text{BBT}^{H(k_\text{o}+1)} {\bf F}_\text{BBR}^{H(k_\text{o})}\right)\\     
   &\qquad \qquad\qquad\quad  \qquad\mathop  \ge \limits^{(c)}  J\left({\bf F}_\text{RFR}^{(k_\text{o})}; {\bf F}_\text{RFT}^{(k_\text{o}+1)} {\bf F}_\text{BBT}^{H(k_\text{o}+1)} {\bf F}_\text{BBR}^{H(k_\text{o}+1)}\right)\\
   &\qquad \qquad\qquad\quad  \qquad\mathop  \ge \limits^{(d)}  J\left({\bf F}_\text{RFR}^{(k_\text{o}+1)}; {\bf F}_\text{RFT}^{(k_\text{o}+1)} {\bf F}_\text{BBT}^{H(k_\text{o}+1)} {\bf F}_\text{BBR}^{H(k_\text{o}+1)}\right)\\   
   &\qquad \qquad\qquad\quad  \qquad=	J\left({\bf F}_\text{RFT}^{(k_\text{o}+1)},{\bf F}_\text{BBT}^{(k_\text{o}+1)}, {\bf F}_\text{BBR}^{(k_\text{o}+1)}, {\bf F}_\text{RFR}^{(k_\text{o}+1)} \right),	\\ 
   	\end{split}	
   \end{equation}
   where  (a) and (c) respectively follow from 
   \begin{equation}
   	\begin{split}
   J\left({\bf F}_\text{RFT}^{(k_\text{o})};{\bf F}_\text{BBT}^{(k_\text{o}+1)} \right.& \left.{\bf F}_\text{BBR}^{H(k_\text{o})} {\bf F}_\text{RFR}^{H(k_\text{o})} \right) \\
   & =  \mathop{\min}\limits_{{\bf F}_\text{BBT}} J\left({\bf F}_\text{RFT}^{(k_\text{o})};{\bf F}_\text{BBT} {\bf F}_\text{BBR}^{H(k_\text{o})} {\bf F}_\text{RFR}^{H(k_\text{o})} \right),
   	\end{split}	
   \end{equation}
   and 
   \begin{equation}
   	\begin{split}
   J\left({\bf F}_\text{RFR}^{(k_\text{o})}; {\bf F}_\text{RFT}^{(k_\text{o}+1)} \right.& \left.{\bf F}_\text{BBT}^{H(k_\text{o}+1)} {\bf F}_\text{BBR}^{H(k_\text{o}+1)}\right) \\
   &= \mathop{\min}\limits_{{\bf F}_\text{BBR}}  J\left({\bf F}_\text{RFR}^{(k_\text{o})}; {\bf F}_\text{RFT}^{(k_\text{o}+1)} {\bf F}_\text{BBT}^{H(k_\text{o}+1)} {\bf F}_\text{BBR}^H\right),
   	\end{split}	
   \end{equation}
   (b) and (d)  follow from \eqref{inequality1} and \eqref{inequality2}, respectively.
   
   Thus, $\{J({\bf F}_\text{RFT}^{(k_\text{o})},{\bf F}_\text{BBT}^{(k_\text{o})}, {\bf F}_\text{BBR}^{(k_\text{o})}, {\bf F}_\text{RFR}^{(k_\text{o})} ) \}$ is a non-increasing  sequence and thus it converges since  $J({\bf F}_\text{RFT},{\bf F}_\text{BBT}, {\bf F}_\text{BBR}, {\bf F}_\text{RFR})$ is lower bounded. The proof of convergence of Algorithm 2 is completed.
   
   For Algorithm 2,  the main complexity in each iteration includes the following  three parts: 
   
   1)  Compute ${\bf F}_\text{BBT}$ and ${\bf F}_\text{BBR}$. The  complexity  of  pseudo inversion  can be measured by  the complexity of singular value decomposition. Thus, the main complexity for this part is $\mathcal{O}( n_\text{t}N_\text{RFT}^2 +n_\text{r} N_\text{RFR}^2+ n_\text{s}^2( n_\text{r}+n_\text{t} )   )$. 
   
   2) Compute  ${\bf F}_\text{RFT}$ and ${\bf F}_\text{RFR}$ with MM methods. The main complexity comes  from finding the maximum eigenvalue of ${\bf Q}_\text{T}$ and the maximum eigenvalue of ${\bf Q}_\text{R}$.  The main complexity of this part is $\mathcal{O}( (n_\text{r}N_\text{RFR})^3+(n_\text{t}N_\text{RFT})^3  )$.
   
   Thus, the main complexity for Algorithm 2 is given by $\mathcal{O}( K_\text{out}(K_\text{in}( (n_\text{r}N_\text{RFR})^3+(n_\text{t}N_\text{RFT})^3  )+  n_\text{t}N_\text{RFT}^2 +n_\text{r} N_\text{RFR}^2+ n_\text{s}^2( n_\text{r}+n_\text{t} )   )    ) $.

\end{appendices}

\bibliographystyle{ieeetr}
\bibliography{ref}
\end{document}